\theoremstyle{plain}
\newtheorem{theorem}{Theorem}
\newtheorem{proposition}{Proposition}
\newtheorem{corollary}{Corollary}
\newtheorem{lemma}{Lemma}
\theoremstyle{remark}
\newtheorem*{remark}{Remark}
\newcommand{\s}{\boldsymbol{s}}
\newcommand{\iid}{ {\stackrel{i.i.d.}{\sim}} }
\numberwithin{equation}{section}
\theoremstyle{plain}
\title{Inverse set estimation and inversion of simultaneous confidence intervals
}
\author{
  Junting Ren \\
  Division of Biostatistics\\
  University of California San Diego \\
  \texttt{j5ren@ucsd.edu} \\
   \And
  Fabian J.E. Telschow \\
  Institute of Mathematics \\
  Humboldt Universit\"{a}t zu Berlin \\
  \texttt{fabian.telschow@hu-berlin.de} \\
  \And
  Armin Schwartzman \\
  Division of Biostatistics and Hal{\i}c{\i}o\u{g}lu Data Science Institute \\
  University of California San Diego \\
  \texttt{armins@ucsd.edu} \\
}
\begin{document}
\maketitle
\begin{abstract}
Motivated by the questions of risk assessment in climatology (temperature change in North America) and medicine (impact of statin usage and COVID-19 on hospitalized patients), we address the problem of estimating the set in the domain of a function whose image equals a predefined subset. Existing methods that construct confidence sets require strict assumptions. We generalize the estimation of such sets to dense and non-dense domains with protection against ``data peeking'' by proving that confidence sets of multiple levels can be simultaneously constructed with the desired confidence non-asymptotically through inverting simultaneous confidence bands. A non-parametric bootstrap algorithm and code are provided.
\end{abstract}

\keywords{Inverse set \and Simultaneous confidence bands \and Bootstrap \and Non-parametric}

\section{Introduction}
\subsection{Motivation}
One motivating problem for our work comes from the data analysis in \cite{sommerfeld2018confidence}. The data used here is obtained from the North American Regional Climate Change Assessment Program (NARCCAP) project \cite{mearns2013climate}. This data comprises two sets of 29 geographically registered arrays of average seasonal temperatures for summer (June-August) and winter (December-February), during two time frames: the late 20th century (1971–1999) and the mid-21st century (2041-2069). The aim of the analysis is to identify specific geographical regions where the difference in average summer or winter temperatures between these two periods exceeds a certain benchmark, with the intention of helping policymakers focus on regions that are at higher risk for effects of climate change.

Mathematically, the regions with temperature difference exceeding $c$ degrees can be defined as $\mu^{-1}(U)=\{\s \in \mathcal{S}:\mu(\s) \in U \}$: the set in the closed domain $\mathcal{S}$ such that the function output $\mu(\s)$ (true difference in temperature) is in the half interval $U=[c,\infty)$. We call $\mu^{-1}(U)$ {\em inverse set} because it is the preimage or inverse image of a set $U \subset \mathbb{R}$ under a deterministic function $\mu: \mathcal{S} \mapsto \mathbb{R}$. Suppose $\mu$ is unknown but data is available to construct an estimator $\hat\mu_n$ where $n$ is the sample size. A point estimate of the inverse set $\mu^{-1}(U)$ can be constructed as $\hat\mu_{n}^{-1}(U)$, indicated by the inside of the green contours in the middle lower and upper panels of Figure \ref{fig:climate_result} for $U=[2,\infty)$. But how do we assess the spatial uncertainty of such an estimate?

To assess this uncertainty, \cite{sommerfeld2018confidence} introduced Coverage Probability Excursion (CoPE) sets, here called inner and outer confidence sets (CSs), that are sub- and super-sets of the target inverse set, i.e.
\begin{equation*}
    \mathrm{CS}_\mathrm{in}(U) \subseteq \mu^{-1}(U) \subseteq \mathrm{CS}_{\mathrm{out}}(U). \label{eq:intro_inverse_set_def}
\end{equation*}
with a certain pre-specified probability, say 95\%. We call these CSs due to their analogy to confidence intervals, with the ``lower bound'' being $\mathrm{CS}_\mathrm{in}(U)$ and the ``upper bound'' being $\mathrm{CS}_{\mathrm{out}}(U)$. In the NARCCAP application, for $U=[2,\infty)$, $\mathrm{CS}_{\mathrm{in}}(U)$ and $\mathrm{CS}_{\mathrm{out}}(U)$ are indicated respectively by the inside of the red and blue contours in the middle lower panel of Figure \ref{fig:climate_result}.

In order to have a precise control of $\mathbb{P}\left(\mathrm{CS}_\mathrm{in}(U) \subseteq \mu^{-1}(U) \subseteq \mathrm{CS}_{\mathrm{out}}(U)\right)$, \cite{sommerfeld2018confidence} assume that the domain $\mathcal{S}$ is a dense subset of $\mathbb{R}^d$, that both $\mu(\s)$ and $\hat{\mu}_n(\s)$ are continuous whenever they have values close to $c=2$, and the desired coverage is only guaranteed asymptotically  as the sample size $n$ goes to infinity. These assumptions lead to a rather complicated proof and limit its applicability and generality.

Even for the NARCCAP climate change data that is used as main illustration in \cite{sommerfeld2018confidence}, the assumptions are not strictly satisfied. The data consists of only 29 samples per location, at which the algorithm in \cite{sommerfeld2018confidence} fails to construct CSs that achieve the correct coverage in simulations, as we show in Section \ref{sec:simulation}. In addition, the temperature data is only observed on a finite set of locations, so it is not strictly dense in $\mathbb{R}^2$ \cite{mearns2013climate}. 

\begin{figure}
    \centering
    \includegraphics[width=14cm]{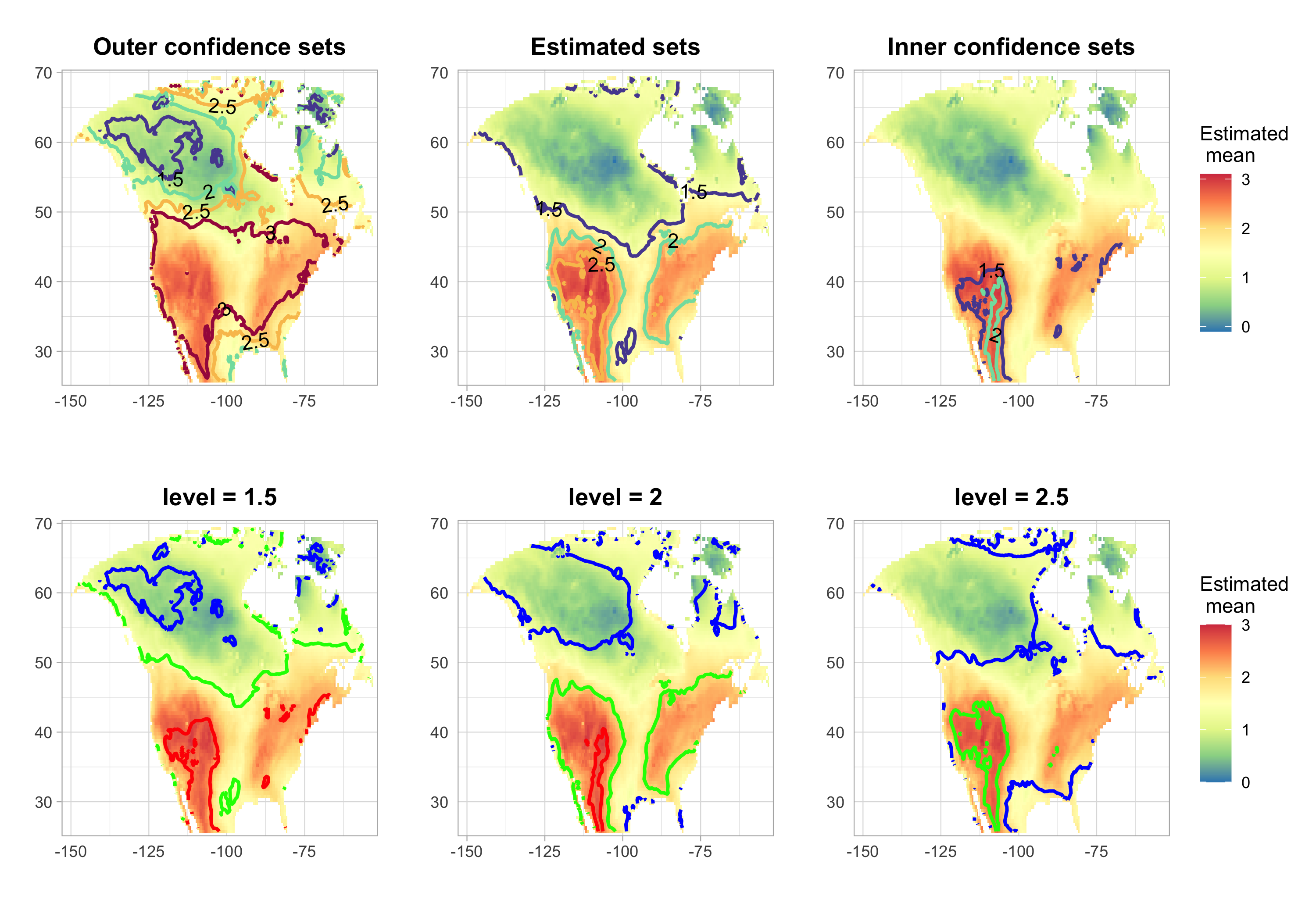}
    \caption{Confidence sets for the increase of the mean summer temperature (June–August) in North America between the 20th and 21st centuries according to the specific climate model analyzed in \cite{sommerfeld2018confidence}. Heat maps show the estimate of the mean difference. The first row displays the contours of the outer confidence sets, estimated inverse set, and the inner confidence sets, for various levels. The three plots in the second row display the confidence sets for the inverse sets, where the estimated mean difference is greater or equal to the individual level 1.5, 2.0, or 2.5 respectively. In the second row, the blue line is the contour of the outer confidence set, the green line is the contour of the estimated inverse set and the red line is the contour of the inner confidence set.}
    \label{fig:climate_result}
\end{figure}

Due to its required assumptions, the original approach \cite{sommerfeld2018confidence} is designed for dense functional data and cannot be applied to other data types such as multiple regression data. For instance, using the data in \cite{daniels2020relation}, consider the problem of identifying patient characteristics that lead to a risk of having a severe outcome which is higher than a certain threshold. Figure \ref{fig:covid_confidence_set} shows the estimated probability of hospitalized patients having a severe outcome, depending on age, COVID status, and statins medication status, obtained using multiple logistic regression. The use of statin [adjusted odds ratio (aOR) 0.78, confidence interval (CI) 0.66 to 0.93] is associated with decreased probability of severe outcome. Using CSs, we can visualize the protective effect of statin for better interpretation as detailed in Section \ref{sec:wide_application}. However, since categorical covariates are discrete, the domain is not a dense subset of $\mathbb{R}^d$ where $d$ is the number of covariates. Therefore, the original method or other existing methods for constructing CSs are not applicable in this scenario, but the method we propose here is.

In terms of statistical inference, the existing approaches require the investigator to set a fixed excursion threshold level, for example $2^{\circ}$C in the climate change data. This threshold depends on the context. Yet, setting a good threshold is difficult even for domain experts \cite{rogelj2009halfway}. Why is $2^{\circ}$C important but not $1.5^{\circ}$C? It is natural, and almost unavoidable, for investigators to try different thresholds and choose those that give most meaningful results. An analysis example using multiple thresholds is shown in the upper panel of Figure \ref{fig:climate_result} or in Figure \ref{fig:covid_confidence_set}. Therefore, to assure valid inference with control of type I error rate, the coverage of the CSs should be {\em simultaneous} over all thresholds. 

\subsection{Contributions}
This paper proposes an elegant solution to overcome the limitations of the previous methods. The answer, it turns out, is to construct confidence sets by inverting pre-built simultaneous confidence intervals (SCIs) which are widely applicable in different data modalities. In this paper, we underscore the broad applicability of our method, primarily concentrating on the construction of CSs for two prevalent but distinct data modalities: dense functional data and multiple regression data. The performance of various algorithms in constructing SCIs for dense functional data has been rigorously evaluated in prior work \cite{telschow2022simultaneous}. For multiple regression data, although the non-parametric bootstrap algorithm has been validated as a method for capturing the asymptotic distribution of multiple linear regression coefficients \cite{freedman1981bootstrapping}, its efficacy in constructing SCIs within a finite sample setup remains largely unexplored. Consequently, we introduce a non-parametric bootstrap algorithm, supplemented by R code, for constructing SCIs in multiple regression and provide a comprehensive evaluation of its performance. Our simulation results reveal that this approach not only controls the predetermined Type I error rate effectively but also maintains robustness despite finite sample sizes and does not necessitate the continuity of covariates. Furthermore, our method of inverting pre-built SCIs ensures that the coverage probability of the confidence sets, for any given threshold $c \in \mathbb{R}$, aligns precisely with the SCI coverage rate, as corroborated by our theorems. Inspired by Goeman \cite{goeman2011multiple,goeman2021only}, this safeguards against ``data peeking'' in exploratory data analysis, thereby enabling researchers to construct confidence sets for any threshold $c$ without concerns about compromising the control over the Type I error rate. The algorithm, simulation, and data application code associated with our study is accessible online at \url{https://github.com/junting-ren/inverse_set_SCI}.

\begin{figure}
    \centering
    \includegraphics[width=14cm]{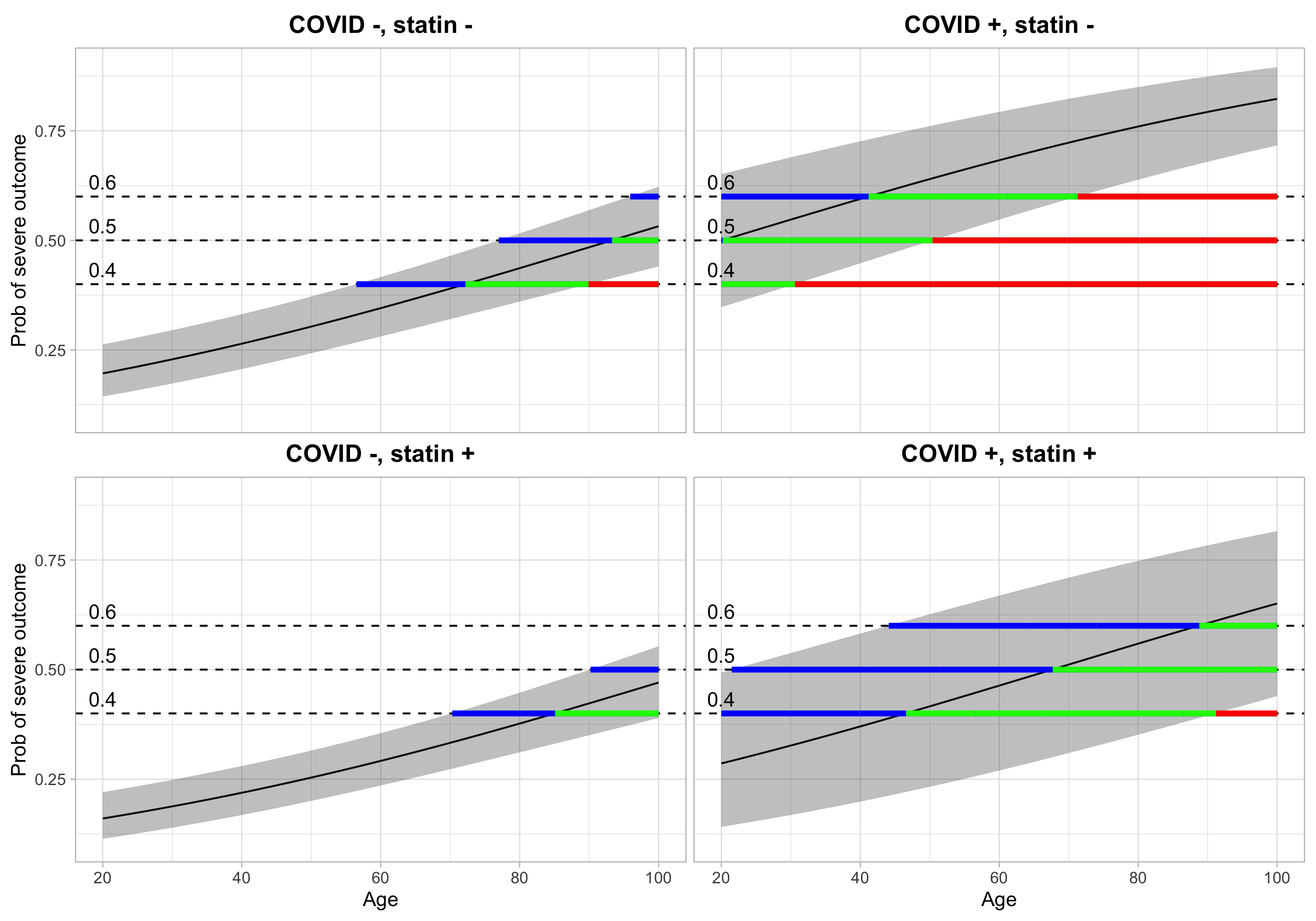}
    \caption{Simultaneous confidence set for the probability of severe outcome. We fixed other variables at ACE = 0, ARB = 0, sex = Male, CKD = 1, hypertension=1, CVD = 1, diabetes=1, obesity = 1. The gray shaded area is the 95\% SCIs, the solid black line is the estimated probability. The red horizontal line shows the inner confidence sets (where the lower SCIs are greater than the corresponding level) which are contained in the estimated inverse upper excursion set colored as the green and red horizontal line (where the estimated means are greater than the corresponding levels); the outer confidence sets are colored by the blue, green and red line (where the upper SCIs are greater than the corresponding levels) and contain both the estimated inverse sets and the inner confidence sets.}
    \label{fig:covid_confidence_set}
\end{figure}

\subsection{Other existing inverse set estimation methodology}
In addition to application to climate change \cite{sommerfeld2018confidence}, inverse set estimation methods are applied in many other different fields, such as astronomy \cite{jang2006nonparametric}, medical imaging \cite{willett2005level, bowring2019spatial, bowring2021confidence}, dose-effect finding \cite{jankowski2014random}, and geoscience  \cite{french2017assessing}. Furthermore, there is a growing trend to quantify the effect size for genomic regions rather than just testing the null hypothesis \cite{weinstein2013selection, benjamini2019selection}, where inverse set estimation methods can be utilized to quantify the uncertainty of genomic regions with effects greater than a certain threshold. 

However, just like the aforementioned method of \cite{sommerfeld2018confidence}, existing inverse set estimation methods are only applicable to specific kinds of data and require strict assumptions.
Other methods are specifically designed for scenarios where the function $\mu$ is a density function \cite{mammen2013confidence, saavedra2016comparative, qiao2019nonparametric}. Inverse set methods have been also developed for stochastic processes (random functions), but they require the process itself to be Gaussian and data must be observed on a fixed grid \cite{french2013spatio,french2014confidence, bolin2015excursion}. The additional significant issue with all the inverse set estimation methods above is that the estimated confidence sets are only valid for a single threshold $c$, for example, estimating the set $\mu^{-1}[c, +\infty)$ for a fixed threshold $c$. 

\subsection{Existing simultaneous confidence interval methods}
Since the proposed inverse set estimation method is based on SCIs, it is worth reviewing existing SCI methods, to which our method would be applicable. For dense functional data, researchers constructed SCIs based on functional central limit theorems in the Banach space using Monte-Carlo simulations with an estimate of the limiting covariance structure \cite{degras2017simultaneous,degras2011simultaneous,cao2014simultaneous}, based on bootstrap \cite{wang2020simultaneous,chang2017simultaneous}, and based on the Gaussian Kinematic formula \cite{telschow2022simultaneous}. For sparse functional data,  SCIs are built using functional principal component analysis \cite{yao2005functional, ma2012simultaneous}. For high dimensional data such as genomics data with discrete indexing, valid SCIs are built for high dimensional but a finite number of parameters before selection \cite{park2007simultaneous} or after selection \cite{hwang2013empirical,qiu2007sharp}. For survival data, SCIs for survival functions are built using Greenwood's variance formula under large sample sizes \cite{nair1984confidence}, as well as SCIs for the difference or ratio of two survival functions \cite{parzen1997simultaneous, mckeague2002simultaneous}. For regression problems, researchers are often interested in how the response $y$ changes with a vector of predictors $\boldsymbol{x}$, or the magnitude of the regression coefficients. Therefore, SCIs can be constructed for $y$ on the range of $\boldsymbol{x}$ for simple linear regression \cite{liu2008construction} and multiple regression on the dense compact subset of continuous covariates in $\mathbb{R}^d$ \cite{sun1994simultaneous}. However, to the best of our knowledge, there is no practical bootstrap algorithm nor accessible code online that constructs SCIs for linear combinations of coefficients of multiple regression that is valid under finite sample size, which is addressed in the current paper with our algorithm and code.  

\subsection{Outline}
After stating and proving the main theorem and corollaries in Section \ref{sec:theory}, we present the results of simulation studies that validate our method for continuous domains using dense functional data and regression mean prediction on a fine grid of predictors. For discrete domains, confidence sets for regression coefficients are constructed using simulated datasets, and the results are shown in Section \ref{sec:simulation}. The non-parametric bootstrap algorithm for constructing SCIs for regression coefficients and linear combination of the coefficients is provided in Section \ref{sec:simulation}. In addition, for different correlation structures between the estimated means $\hat \mu(s)$ in the domain $\mathcal{S}$, we demonstrate how conservative the method is when only a finite number of confidence sets are constructed, compared to the SCI nominal coverage rate. We showcase the advantages of our method over the previous approach \cite{sommerfeld2018confidence} in both the simulations and the real data application. Following the simulations, we exhibit two motivating applications in two distinct domain: probability contour for mean temperature difference map for climate change, and logistic regression for determining whether statin is protective against the severe outcome of Coronavirus disease 2019 (COVID) patients in Section \ref{sec:wide_application}. We conclude with a brief discussion in Section \ref{sec:discussion}.

\section{Theory}\label{sec:theory}
\subsection{Setup}\label{sec:notation_setup}
The goal of inverse set estimation is to estimate the set $\mu^{-1}(U)=\{ \s \in \mathcal{S}:\mu(\s) \in U \}$ where $\mu: \mathcal{S} \mapsto \mathbb{R}$ is an unknown deterministic function, $U$ is a fixed subset of $\mathbb{R}$, and $\mathcal{S}$ is a closed indexing set. The "point estimator" of the true inverse set is:
$$
\hat \mu^{-1}(U)=\{ \s \in \mathcal{S}:\hat\mu(\s) \in U \}.
$$

Similar to the point estimate of a scalar parameter, we need a ``lower bound'' and an ``upper bound'' for the estimated inverse set. Therefore, we introduce the data-dependent outer confidence set $\mathrm{CS}_{\mathrm{out}}(U)$ and the data-dependent inner confidence set $\mathrm{CS}_\mathrm{in}(U)$ with the goal that the true inverse set $\mu^{-1}(U)=\{ \s \in \mathcal{S}:\mu(\s) \in U \}$ is ``sandwiched'' within them:
$$
\mathrm{CS}_\mathrm{in}(U) \subseteq \mu^{-1}(U) \subseteq \mathrm{CS}_{\mathrm{out}}(U).
$$

\subsection{Estimating inverse upper excursion sets}
The central idea of this article is that such confidence sets can be obtained by inverting SCIs. Let $\hat B_l(\s)$ and $\hat B_u(\s)$ denote the estimated lower and upper SCI functions at pre-specified level $\alpha$ such that:
$$
\mathbb{P} \left[\forall \s \in \mathcal{S}: \hat B_l(\s) \leq \mu(\s) \leq \hat B_u(\s) \right] = 1- \alpha
$$

Because the function $\mu$ and the SCIs are generally not one-to-one functions, the inversion can get complicated depending on the interval $U$. We simplify this issue by setting $U$ as half of the real line, and this is often the set that researchers are interested in. We can define the following {\em inverse upper excursion set} at level $c$ as:
$$
\mu^{-1}[c,+\infty)=\{\s \in \mathcal{S} \mid \mu(\s) \geq c\}
$$
In addition, we define the following sets as the inner and outer confidence sets for the inverse upper excursion set $\mu^{-1}[c,+\infty)$ for a single level $c$:
$$
\mathrm{CS}_\mathrm{in}[c,+\infty) := \hat B_l^{-1}[c,+\infty)=\left\{s \in \mathcal{S} \mid  \hat B_l(\s) \geq c \right\}
$$

$$
\mathrm{CS}_{\mathrm{out}}[c,+\infty) :=\hat B_u^{-1}[c,+\infty)=\left\{\s \in \mathcal{S} \mid  \hat B_u(\s) \geq c\right\}
$$

In Figure \ref{fig:1d_dense_CS_showcase}, the red horizontal lines are the $\mathrm{CS}_\mathrm{in}[c,+\infty)$, whereas the union of red, green and blue horizontal lines are the $\mathrm{CS}_{\mathrm{out}}[c,+\infty)$. Henceforth, we distinguish between the inference when $c$ is a single level and when the inference is simultaneous over multiple choices of the level $c$.

\subsubsection{Single level confidence set from SCI}
In \cite{hasenstab2015identifying,hasenstab2016robust}, after constructing a bootstrap percentile SCI, the authors claim that the true mean is greater than $c=0$ in the region where the estimated lower interval is greater than $0$. However, no probability or confidence statement is given. This is one of the ad-hoc examples of using SCI for inverse set estimation in applications. The following Proposition \ref{prop:vanilla} provides a formal justification for the procedure above, stating that for a single level $c$, $\hat B_l^{-1}[c,+\infty)$ is a set such that we are at least 95\% confident that $\forall \s \in\hat B_l^{-1}[c,+\infty), \mu(\s)\geq c$.

\begin{proposition}\label{prop:vanilla}
For a fixed level $c \in \mathbb{R}$, and SCIs with $\alpha$ type I family-wiser error rate, we have
   $$
   \mathbb{P}\left(\hat B_l^{-1}[c,+\infty) \subseteq \mu^{-1}[c,+\infty)\right)\geq \mathbb{P}\left(\forall \s \in \mathcal{S}:\hat B_l(\s) \leq \mu(\s) \leq \hat B_u(\s) \big) \right)=1-\alpha
   $$
\end{proposition}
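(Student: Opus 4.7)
The plan is to reduce the probability statement to a deterministic (pointwise) set-inclusion that holds on the SCI coverage event. The key observation is that the two-sided SCI guarantee already implies the one-sided lower bound $\hat B_l(\s)\le \mu(\s)$ everywhere, and that lower bound alone is enough to force $\hat B_l^{-1}[c,+\infty) \subseteq \mu^{-1}[c,+\infty)$ deterministically.

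First, I would define two events on the underlying probability space:
\[
A = \bigl\{\hat B_l(\s)\le \mu(\s) \text{ for all } \s \in \mathcal{S}\bigr\},\qquad
B = \bigl\{\hat B_l(\s)\le \mu(\s)\le \hat B_u(\s) \text{ for all } \s \in \mathcal{S}\bigr\}.
\]
By the SCI hypothesis, $\mathbb{P}(B)=1-\alpha$, and clearly $B\subseteq A$.

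Next, I would argue that on the event $A$, the inclusion $\hat B_l^{-1}[c,+\infty) \subseteq \mu^{-1}[c,+\infty)$ holds as a set identity in $\mathcal{S}$. Pick any $\s_0\in\hat B_l^{-1}[c,+\infty)$, so that $\hat B_l(\s_0)\ge c$; on $A$, $\mu(\s_0)\ge \hat B_l(\s_0)\ge c$, whence $\s_0\in\mu^{-1}[c,+\infty)$. Therefore
\[
A \;\subseteq\; \bigl\{\hat B_l^{-1}[c,+\infty) \subseteq \mu^{-1}[c,+\infty)\bigr\}.
\]

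Finally, monotonicity of probability gives
\[
\mathbb{P}\bigl(\hat B_l^{-1}[c,+\infty)\subseteq \mu^{-1}[c,+\infty)\bigr) \;\ge\; \mathbb{P}(A) \;\ge\; \mathbb{P}(B) \;=\; 1-\alpha,
\]
which is exactly the claim.

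There is no real obstacle here; the proof is essentially a one-line set-theoretic observation once the events are named carefully. The only thing to be vigilant about is the direction of the inequalities: only the lower SCI is used for the inner confidence set, and dropping the upper bound part of $B$ (going to $A$) is the step that makes the inequality potentially strict rather than an equality. This same compartmentalization is what will let the analogous argument for the outer confidence set, and for the simultaneous-over-$c$ version in subsequent results, go through in parallel.
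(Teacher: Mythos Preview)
Your proof is correct and follows essentially the same approach as the paper: show that the SCI coverage event implies the set inclusion $\hat B_l^{-1}[c,+\infty)\subseteq\mu^{-1}[c,+\infty)$ by picking an arbitrary $\s_0$ with $\hat B_l(\s_0)\ge c$ and using $\mu(\s_0)\ge \hat B_l(\s_0)$. The only (harmless) difference is that you insert the intermediate one-sided event $A$ to highlight that the upper band is not needed, which makes the source of the inequality's potential strictness explicit.
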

\begin{proof}
  Define the following events 
  $$E := \big\{ \hat B_l^{-1}[c,+\infty) \subseteq \mu^{-1}[c,+\infty) \big\},$$ 
  and 
  $$E_{SCI} := \left\{\forall \s \in \mathcal{S}: \hat B_l(\s) \leq \mu(\s) \leq  \hat B_u(\s) \big) \right\}.$$
  We want to show:
   $$
   E_{SCI} \implies E.
   $$
  Conditioning on the $E_{SCI}$ event, assume for a fixed $\s' \in \mathcal{S}$, we have $\hat B_l(\s') \geq c$, then 
  $$
  \mu(\s')\geq \hat B_l(\s') \geq c
  $$
  by $E_{SCI}$. This means that $\forall \s \in \mathcal{S}: \hat B_l(\s) \geq c$, we must also have $\mu(\s) \geq c$ holds as well, which is equivalent to the statement $\hat B_l^{-1}[c,+\infty) \subseteq \mu^{-1}[c,+\infty)$.
\end{proof}

There are two issues with simply converting the lower band of SCI into inner confidence sets. The first major problem is that this inversion is conservative, as indicated by the coverage probability inequality in Proposition \ref{prop:vanilla}, and we do not know how conservative this construction is. In other words, is it possible to construct multiple confidence sets for different levels $c$ and still maintain the type I error rate? And when will the equality hold? Second, this proposition only provides the inner confidence set as $B_l^{-1}[c,+\infty)$, but gives no outer confidence set. This is of interest because the outer confidence set would capture regions where the signal is not strong, and the region outside the outer confidence set would capture regions where $\mu(\s)<c$ with the desired probability. This is additional information not provided by the inner set or lower SCI. We address these issues together in the next section. 

\subsubsection{Simultaneous confidence sets for multiple inverse upper excursion sets}
Improving on Proposition \ref{prop:vanilla}, we obtain equality in the coverage probability by introducing both upper and lower confidence sets for all levels in $\mathbb{R}$, as shown in  Theorem \ref{theorem:upper}.
\begin{theorem}
\label{theorem:upper}
(Inverse upper excursion set) 
Let $\hat B_l(\s)$ and $\hat B_u(\s)$ be the pre-constructed SCIs on the domain $\mathcal{S}$, then 
\[
\begin{split}
\mathbb{P} \left(\forall c \in \mathbb{R}: \hat B_l^{-1}[c,+\infty) \subseteq \mu^{-1}[c,+\infty) \subseteq \hat B_u^{-1}[c,+\infty) \right) 
= \mathbb{P} \left(\forall \s \in \mathcal{S}: \hat B_l(\s)\leq \mu(\s)  \leq \hat B_u(\s) \right) 
\end{split}
\]
\end{theorem}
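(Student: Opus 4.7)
The plan is to prove the probability identity by showing that the two events appearing inside $\mathbb{P}(\cdot)$ are in fact equal as subsets of the sample space, so the equality of probabilities will be immediate. Write
$$E_{\mathrm{inv}} := \big\{\forall c \in \mathbb{R}:\ \hat B_l^{-1}[c,+\infty) \subseteq \mu^{-1}[c,+\infty) \subseteq \hat B_u^{-1}[c,+\infty)\big\}$$
and
$$E_{SCI} := \big\{\forall \s \in \mathcal{S}:\ \hat B_l(\s)\leq \mu(\s) \leq \hat B_u(\s)\big\}.$$
I will establish $E_{SCI} \subseteq E_{\mathrm{inv}}$ and $E_{\mathrm{inv}} \subseteq E_{SCI}$ separately.

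The forward direction $E_{SCI} \subseteq E_{\mathrm{inv}}$ will essentially iterate the argument of Proposition \ref{prop:vanilla} over all $c$ and also on the upper side. Conditioning on $E_{SCI}$, fix $c \in \mathbb{R}$ and $\s \in \hat B_l^{-1}[c,+\infty)$; then $\hat B_l(\s)\geq c$, and by $E_{SCI}$ we have $\mu(\s)\geq \hat B_l(\s) \geq c$, so $\s \in \mu^{-1}[c,+\infty)$. A symmetric argument using $\mu(\s)\leq \hat B_u(\s)$ handles the outer inclusion. Since the argument is uniform in $c$, $E_{SCI}$ forces $E_{\mathrm{inv}}$.

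The reverse direction $E_{\mathrm{inv}} \subseteq E_{SCI}$ is the main new content and requires choosing the threshold $c$ carefully so as to extract pointwise inequalities from set inclusions. Conditioning on $E_{\mathrm{inv}}$, fix an arbitrary $\s_0 \in \mathcal{S}$. The trick will be to take $c := \hat B_l(\s_0)$; then trivially $\s_0 \in \hat B_l^{-1}[c,+\infty)$, so by $E_{\mathrm{inv}}$ instantiated at this $c$ we get $\s_0 \in \mu^{-1}[c,+\infty)$, i.e.\ $\mu(\s_0)\geq \hat B_l(\s_0)$. Mirroring the choice with $c := \mu(\s_0)$ and using the outer inclusion $\mu^{-1}[c,+\infty) \subseteq \hat B_u^{-1}[c,+\infty)$ will give $\hat B_u(\s_0)\geq \mu(\s_0)$. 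Since $\s_0$ was arbitrary, $E_{SCI}$ holds, completing the set equality $E_{\mathrm{inv}} = E_{SCI}$ and hence the probability identity.

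I do not anticipate a serious technical obstacle; no continuity, measurability or topological hypothesis on $\mu$, $\hat B_l$, $\hat B_u$, or $\mathcal{S}$ is needed, because all manipulations are purely set-theoretic. The single subtle point to highlight is why the theorem is strictly stronger than Proposition \ref{prop:vanilla}: only because the quantifier ``$\forall c \in \mathbb{R}$'' is present in $E_{\mathrm{inv}}$ can one legitimately instantiate $c$ at the data-dependent values $\hat B_l(\s_0)$ or $\mu(\s_0)$ in the reverse implication. Removing the universality over $c$ collapses the argument back to the one-sided inclusion of the proposition, which is exactly what motivates the simultaneous formulation here.
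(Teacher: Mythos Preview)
Your proposal is correct and follows essentially the same approach as the paper: both arguments establish the probability identity by proving the two underlying events are equal, with an identical forward implication and the same core idea (instantiating $c$ at a well-chosen value) for the reverse. The only cosmetic difference is that you give the reverse implication as a direct argument (taking $c=\hat B_l(\s_0)$ and $c=\mu(\s_0)$), whereas the paper phrases it as a contradiction (picking $c'\in(\mu(\s'),\hat B_l(\s')]$ or $c'\in(\hat B_u(\s'),\mu(\s')]$); the content is the same.
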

\begin{proof}
    See appendix section.
\end{proof}

Theorem \ref{theorem:upper} states that inner and outer confidence sets for all levels in the real line can be constructed based on the SCIs and maintain the same coverage probability as the SCIs. By introducing the outer confidence set in the probability statement in Proposition \ref{prop:vanilla}, the inversion will still be conservative for a \textit{single} level $c$ by Theorem \ref{theorem:upper}. The detailed procedure is provided in Algorithm \ref{algorithm:upper}.

\begin{algorithm}
\caption{Simultaneous confidence sets for multiple inverse upper excursion sets}
\begin{algorithmic}[1]\label{algorithm:upper}
\REQUIRE $\alpha$ type I error rate SCI on the domain $\mathcal{S}$ with lower band $\hat B_l(\s)$ and upper band $\hat B_u(\s)$.
\REQUIRE A discrete subset with $m$ elements in $\mathbb{R}$ : $\mathcal{C} = \{c_1, c_2,...,c_m\}$.
\FOR{$c$ in $\mathcal{C}$} 
    \STATE {
    Construct inner confidence set as $\mathrm{CS}_\mathrm{in}[c,+\infty) := \hat B_l^{-1}[c,+\infty)=\left\{\s \in \mathcal{S} \mid  \hat B_l(\s) \geq c \right\}$, and outer confidence set as $\mathrm{CS}_{\mathrm{out}}[c,+\infty) :=\hat B_u^{-1}[c,+\infty)=\left\{\s \in \mathcal{S} \mid  \hat B_u(\s) \geq c\right\}$.
    }
\ENDFOR
\RETURN {$\mathrm{CS}_\mathrm{in}[c,+\infty)$ and $\mathrm{CS}_{\mathrm{out}}[c,+\infty)$ for all $c \in \mathcal{C}$.} 
\end{algorithmic}
\end{algorithm}

\begin{remark}
From the proof of Theorem \ref{theorem:upper} in the appendix, it can be seen that when $\mathcal{C}$ is a strict subset of $\mathbb{R}$, as in Algorithm 1, the procedure is conservative, that is, it is guaranteed that $\mathbb{P}\left(\forall c \in \mathcal{C}: \mathrm{CS}_\mathrm{in}[c,+\infty) \subseteq \mu^{-1}[c,+\infty) \subseteq \mathrm{CS}_{\mathrm{out}}[c,+\infty)\right) > 1-\alpha$. Only when $\mathcal{C}=\mathbb{R}$ the equality holds.
\end{remark}

\subsection{Estimating inverse lower excursion sets}
We can define the {\em inverse lower excursion set}:
$$
\mu^{-1}(-\infty,c]=\{\s \in \mathcal{S} \mid \mu(\s) \leq c\}= \{\mu^{-1}[c,+\infty)\}^{\complement}
$$
where $\{\mu^{-1}(-\infty,c]\}^{\complement}$ is the the closed complement of the inverse upper excursion set. We could not directly take the complement of the events in Theorem \ref{theorem:upper} to obtain the confidence sets for the inverse lower excursion set since there is an additional closure operation. 

The following sets are defined as the outer and inner confidence sets for the inverse lower excursion set $\mu^{-1}(-\infty,c]$:

\begin{equation}
    \mathrm{CS}_\mathrm{in}(-\infty,c] :=\hat B_u^{-1}(-\infty,c]=\left\{\s \in \mathcal{S} \mid  \hat B_u(\s) \leq c\right\}=\{\hat B_u^{-1}[c,+\infty)\}^{\complement} =  \{\mathrm{CS}_\mathrm{out}[c,+\infty)\}^{\complement}
    \label{eq:cs_in_lower}
\end{equation}

\begin{equation}
    \mathrm{CS}_{\mathrm{out}}(-\infty,c] := \hat B_l^{-1}(-\infty,c]=\left\{\s \in \mathcal{S} \mid  \hat B_l(\s) \leq c \right\}=\{\hat B_l^{-1}[c,+\infty)\}^{\complement}=\{\mathrm{CS}_\mathrm{in}[c,+\infty)\}^{\complement}
    \label{eq:cs_out_lower}
\end{equation}

Lemma \ref{lemma:greater_equal} in the Appendix shows that the two events, in which inverse upper or lower excursion sets for all $c \in \mathbb{R}$ are contained in the corresponding confidence sets, are equivalent. This directly leads to Corollary \ref{corollary:lower} below, which guarantees the coverage probability of the confidence sets for inverse lower excursion set.

\begin{corollary}
\label{corollary:lower}
(Inverse lower excursion set) Let $\hat B_l(\s)$ and $\hat B_u(\s)$ be the pre-constructed SCIs on the domain $\mathcal{S}$, then 
\[
\begin{split}
\mathbb{P} \left(\forall c \in \mathbb{R}: \hat B_u^{-1}(-\infty,c] \subseteq \mu^{-1}(-\infty,c] \subseteq \hat B_l^{-1}(-\infty,c] \right) 
= \mathbb{P} \left(\forall \s \in \mathcal{S}: \hat B_l(\s)\leq \mu(\s)  \leq \hat B_u(\s) \right) 
\end{split}
\]
\end{corollary}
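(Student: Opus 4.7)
The plan is to deduce Corollary \ref{corollary:lower} from Theorem \ref{theorem:upper} via the event-level equivalence provided by Lemma \ref{lemma:greater_equal}. Introduce the three events
\[
E_{\text{up}} := \big\{\forall c \in \mathbb{R}: \hat B_l^{-1}[c,+\infty) \subseteq \mu^{-1}[c,+\infty) \subseteq \hat B_u^{-1}[c,+\infty)\big\},
\]
\[
E_{\text{low}} := \big\{\forall c \in \mathbb{R}: \hat B_u^{-1}(-\infty,c] \subseteq \mu^{-1}(-\infty,c] \subseteq \hat B_l^{-1}(-\infty,c]\big\},
\]
\[
E_{\text{SCI}} := \big\{\forall \s \in \mathcal{S}: \hat B_l(\s) \leq \mu(\s) \leq \hat B_u(\s)\big\}.
\]
Theorem \ref{theorem:upper} already supplies $\mathbb{P}(E_{\text{up}}) = \mathbb{P}(E_{\text{SCI}})$, and Lemma \ref{lemma:greater_equal} asserts $E_{\text{up}} = E_{\text{low}}$ outcome-by-outcome, so combining the two collapses immediately to $\mathbb{P}(E_{\text{low}}) = \mathbb{P}(E_{\text{SCI}})$, which is the claim.

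As a self-contained backup, I would mirror the style of Proposition \ref{prop:vanilla} and directly show $E_{\text{low}} = E_{\text{SCI}}$. On $E_{\text{SCI}}$, $\hat B_u(\s) \leq c$ forces $\mu(\s) \leq \hat B_u(\s) \leq c$, and $\mu(\s) \leq c$ forces $\hat B_l(\s) \leq \mu(\s) \leq c$; since $c$ is arbitrary, both inclusions defining $E_{\text{low}}$ hold. Conversely, on $E_{\text{low}}$, specializing to $c = \mu(\s)$ in the inclusion $\mu^{-1}(-\infty,c] \subseteq \hat B_l^{-1}(-\infty,c]$ recovers $\hat B_l(\s) \leq \mu(\s)$ pointwise, and specializing to $c = \hat B_u(\s)$ in the inclusion $\hat B_u^{-1}(-\infty,c] \subseteq \mu^{-1}(-\infty,c]$ recovers $\mu(\s) \leq \hat B_u(\s)$ pointwise, so $E_{\text{low}} \subseteq E_{\text{SCI}}$. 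Either route produces the desired coverage equality without additional probability machinery.

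The main obstacle I anticipate is exactly the subtlety flagged in the text just before the corollary: the raw set complement $\{\mu^{-1}[c,+\infty)\}^{\complement} = \{\s: \mu(\s) < c\}$ differs from the closed inverse lower excursion set $\mu^{-1}(-\infty,c] = \{\s: \mu(\s) \leq c\}$ on the level set $\{\mu = c\}$, so one cannot simply complement every set appearing in Theorem \ref{theorem:upper} term-by-term and declare victory. The merit of routing the argument through either Lemma \ref{lemma:greater_equal} or the symmetric pointwise event $E_{\text{SCI}}$ is that both treat the non-strict inequalities $\leq$ and $\geq$ on an equal footing, thereby sidestepping the boundary mismatch that plain set complementation would introduce.
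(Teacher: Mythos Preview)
Your primary route is exactly the paper's proof: invoke Lemma~\ref{lemma:greater_equal} to identify $E_{\text{up}}$ with $E_{\text{low}}$ (implicitly using that $\{f>c\}^\complement=\{f\le c\}$ to pass from the strict-inequality form $E_1'$ of the lemma to $E_{\text{low}}$), then apply Theorem~\ref{theorem:upper}. Your backup argument showing $E_{\text{low}}=E_{\text{SCI}}$ directly by specializing $c=\mu(\s)$ and $c=\hat B_u(\s)$ is also correct and is a nice self-contained alternative that the paper does not spell out.
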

\begin{proof}
    Using Lemma \ref{lemma:greater_equal}, we know that the following two events are equivalent
    $$E_1 = \left\{\forall c \in \mathbb{R}: \hat B_l^{-1}[c,+\infty) \subseteq \mu^{-1}[c,+\infty) \subseteq \hat B_u^{-1}[c,+\infty)\right\}$$
    $$E_2 = \left\{\forall c \in \mathbb{R}: \hat B_u^{-1}(-\infty,c] \subseteq \mu^{-1}(-\infty,c] \subseteq \hat B_l^{-1}(-\infty,c] \right\}$$
    From Theorem \ref{theorem:upper}, the Corollary directly follows.
\end{proof}

\begin{remark}
    Observe that $\mathrm{CS}_\mathrm{in}(-\infty,c]$ is the region where we estimate the true mean is less than or equal to $c$ with a pre-defined probability. From Equation \ref{eq:cs_in_lower}, we know $\{\mathrm{CS}_\mathrm{out}[c,+\infty)\}^{\complement} = \mathrm{CS}_\mathrm{in}(-\infty,c]$. Therefore, instead of using confidence sets for inverse lower excursion set,  $\mathrm{CS}_\mathrm{in}[c,+\infty)$ and $\{\mathrm{CS}_\mathrm{out}[c,+\infty)\}^{\complement}$ for inverse upper excursion set (Theorem \ref{theorem:upper}) will be sufficient in finding the region where the true mean is greater or equal to $c$ and the region where the true mean is less or equal to $c$ respectively.
    
\end{remark}



\subsection{Estimating inverse interval sets}
Another similar problem of interest is finding the set where the true mean is within a certain interval. For instance, a clinician may want to find patients with blood pressure that are within a certain healthy interval \cite{symonds1923blood}. By taking the intersection of the upper inverse excursion set and lower inverse excursion set, we obtain the {\em inverse interval set} where $a<b \in \mathbb{R}$:
$$\mu^{-1}[a,b] :=\{\s \in \mathcal{S}: a \leq \mu(\s)\leq b\}=\mu^{-1}[a,\infty) \bigcap \mu^{-1}(-\infty,b] .$$

We define the following inner and outer confidence sets for the inverse interval set $\mu^{-1}[a,b]$: 
$$\mathrm{CS}_\mathrm{in}[a,b] := \hat B_l^{-1}[a,\infty) \bigcap \hat B_u^{-1}(-\infty,b] = \{\s \in \mathcal{S}: \hat B_l(\s) \geq a \wedge \hat B_u(\s) \leq b \}$$
$$\mathrm{CS}_{\mathrm{out}}[a,b]  := \hat B_u^{-1}[a,\infty) \bigcap \hat B_l^{-1}(-\infty,b] = \{\s \in \mathcal{S}: \hat B_u(\s) \geq a \wedge \hat B_l(\s) \leq b \}.$$

To illustrate this in Figure \ref{fig:1d_dense_CS_showcase}, the true inverse interval set $\mu^{-1}[a,b]$, where $a=0.2, b=0.8$, is approximately located at $ \{[0.02, 0.06] \cup [0.08, 0.11] \cup [0.27, 0.35]\}$ on the x-axis. The inner confidence set $\mathrm{CS}_\mathrm{in}[a,b]$ is located in the region on the x-axis by intersecting the red horizontal line at $y = 0.2$ with the uncolored horizontal dashed line at $y = 0.8$. The outer confidence set $\mathrm{CS}_{\mathrm{out}}[a,b]$ is located in the region by intersecting the colored (red, green or blue) horizontal lines at $y = 0.2$ with the complement of the red horizontal line at $y=0.8$. Therefore, the inner confidence set is approximately located at  $ \{[0.03, 0.04] \cup [0.09, 0.10] \cup [0.29, 0.33]\}$ on the x-axis. And the outer confidence set is approximately located at $ \{[0.01, 0.12] \cup [0.25, 0.36]\}$ on the x-axis.

Corollary \ref{corollary: interval} provides the theoretical guarantee for the coverage rate of the inner and outer confidence constructed for inverse interval sets.
\begin{corollary}
\label{corollary: interval}
(Inverse interval set) Let $a<b \in \mathbb{R}$ and $\hat B_l(\s)$ and $\hat B_u(\s)$ be the pre-constructed SCIs on the domain $\mathcal{S}$, then 
\[
\begin{split}
& \quad \mathbb{P} \left(\forall a,b  \in \mathbb{R}, a<b: \mathrm{CS}_\mathrm{in}[a,b] \subseteq \mu^{-1}[a,b] \subseteq \mathrm{CS}_{\mathrm{out}}[a,b]  \right) \\ 
&= \mathbb{P} \left(\forall \s \in \mathcal{S}: \hat B_l(\s)\leq  \mu(\s) \leq \hat B_u(\s) \right).
\end{split}
\]
\end{corollary}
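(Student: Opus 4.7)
The plan is to show that the two probabilities are equal by proving that the underlying events coincide: the simultaneous-SCI event $E_{\mathrm{SCI}} = \{\forall \s \in \mathcal{S}: \hat B_l(\s) \le \mu(\s) \le \hat B_u(\s)\}$ is equivalent to the interval-containment event $E_{\mathrm{int}} = \{\forall a<b:\ \mathrm{CS}_\mathrm{in}[a,b] \subseteq \mu^{-1}[a,b] \subseteq \mathrm{CS}_{\mathrm{out}}[a,b]\}$. This avoids any independent probabilistic argument and simply chains together Theorem~\ref{theorem:upper} and Corollary~\ref{corollary:lower}.

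For the direction $E_{\mathrm{SCI}} \Longrightarrow E_{\mathrm{int}}$, I would exploit the fact that the three sets appearing in $E_{\mathrm{int}}$ all factor as intersections of an upper-excursion piece and a lower-excursion piece, namely $\mu^{-1}[a,b]=\mu^{-1}[a,\infty)\cap \mu^{-1}(-\infty,b]$, and likewise $\mathrm{CS}_\mathrm{in}[a,b]=\hat B_l^{-1}[a,\infty)\cap \hat B_u^{-1}(-\infty,b]$ and $\mathrm{CS}_{\mathrm{out}}[a,b]=\hat B_u^{-1}[a,\infty)\cap \hat B_l^{-1}(-\infty,b]$. Conditioning on $E_{\mathrm{SCI}}$, Theorem~\ref{theorem:upper} gives the containment of $\mu^{-1}[a,\infty)$ between $\hat B_l^{-1}[a,\infty)$ and $\hat B_u^{-1}[a,\infty)$ simultaneously for every $a$, and Corollary~\ref{corollary:lower} gives the analogous containment for $\mu^{-1}(-\infty,b]$ for every $b$. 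Intersecting the two inner sets and intersecting the two outer sets preserves the inclusions, so for every $a<b$ we obtain $\mathrm{CS}_\mathrm{in}[a,b]\subseteq \mu^{-1}[a,b]\subseteq \mathrm{CS}_{\mathrm{out}}[a,b]$.

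For the reverse direction $E_{\mathrm{int}} \Longrightarrow E_{\mathrm{SCI}}$, I would argue by contradiction pointwise in $\s$. If $\mu(\s) < \hat B_l(\s)$ for some $\s$, set $a=\hat B_l(\s)$ and pick any $b>\max\{a,\hat B_u(\s)\}$; then $\s\in\mathrm{CS}_\mathrm{in}[a,b]$ but $\mu(\s)\notin[a,b]$, contradicting $E_{\mathrm{int}}$. The symmetric choice $b=\hat B_u(\s)$ with $a<\min\{b,\hat B_l(\s)\}$ handles the case $\mu(\s)>\hat B_u(\s)$.

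The only subtlety, which is the place the plan could fail if one is not careful, is the strict-inequality requirement $a<b$ in the statement: at points where $\hat B_l(\s)=\hat B_u(\s)$ one cannot simply take $a$ and $b$ to equal these bounds, so the construction above uses a strictly larger (or smaller) auxiliary value to preserve $a<b$. Apart from this bookkeeping, no measurability or continuity hypothesis is needed, and the result follows at once by equating $\Prb(E_{\mathrm{int}})=\Prb(E_{\mathrm{SCI}})$.
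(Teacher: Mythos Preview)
Your proof is correct and, like the paper's, works by showing the two underlying events coincide rather than by a separate probabilistic argument. The difference lies in the decomposition used. For the forward direction $E_{\mathrm{SCI}}\Rightarrow E_{\mathrm{int}}$ you make the intersection structure $\mu^{-1}[a,b]=\mu^{-1}[a,\infty)\cap\mu^{-1}(-\infty,b]$ explicit and invoke both Theorem~\ref{theorem:upper} and Corollary~\ref{corollary:lower}; the paper instead routes everything through the upper-excursion event $E_1$ via Lemma~\ref{lemma:interval_ab}, proving $E_1\Rightarrow E_1''$ by contradiction and $E_1''\Rightarrow E_1$ by letting $b\to+\infty$. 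Your forward direction is more transparent about why the interval confidence sets are defined as they are (as intersections of one-sided pieces), at the mild cost of an extra dependency on Corollary~\ref{corollary:lower}; the paper's version stays self-contained in the upper-excursion language. Your reverse direction, choosing finite $a,b$ with care to preserve $a<b$, is a cleaner pointwise contradiction than the paper's informal ``$b=+\infty$'' step and handles the degenerate case $\hat B_l(\boldsymbol{s})=\hat B_u(\boldsymbol{s})$ correctly.
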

\begin{proof}
    This directly follows from Lemma \ref{lemma:interval_ab} and Theorem \ref{theorem:upper}.
\end{proof}

Corollary \ref{corollary: interval} states that the confidence sets, constructed for all combinations of $a<b \in \mathbb{R}$, are guaranteed to have the same coverage rate as the SCIs. The algorithm for constructing confidence sets for inverse interval sets is shown in Algorithm \ref{algorithm:interval}.

\begin{algorithm}
\caption{Simultaneous confidence sets for multiple inverse interval sets}
\begin{algorithmic}[1]\label{algorithm:interval}
\REQUIRE $\alpha$ type I error rate SCIs on the domain $\mathcal{S}$ with lower band $\hat B_l(\s)$ and upper band $\hat B_u(\s)$.
\REQUIRE $\mathcal{C}=\{(a_i,b_i): i =1,...,m, a_i<b_i \}$.
\FOR{$(a,b)$ in $\mathcal{C}$} 
    \STATE {
    Construct inner confidence set as $\mathrm{CS}_\mathrm{in}[a,b]:= \hat B_l^{-1}[a,\infty) \bigcap \hat B_u^{-1}(-\infty,b]$, and outer confidence set as $\mathrm{CS}_{\mathrm{out}}[a,b]:= \hat B_u^{-1}[a,\infty) \bigcap \hat B_l^{-1}(-\infty,b]$.
    }
\ENDFOR
    \RETURN {$\mathrm{CS}_\mathrm{in}[a,b]$ and $\mathrm{CS}_{\mathrm{out}}[a,b]$ for all $(a,b) \in \mathcal{C}$.}
\end{algorithmic}
\end{algorithm}

Using Corollary \ref{corollary: interval}, the confidence sets produced are guaranteed to have the following probability statement hold with a finite level set $\mathcal{C}=\{(a_i,b_i): i =1,...,m, a_i<b_i \}$:  
$$\mathbb{P}\big(\forall (a,b) \in \mathcal{C}, a<b:\mathrm{CS}_\mathrm{in}[a,b]\subseteq\mu^{-1}([a,b]) \subseteq \mathrm{CS}_{\mathrm{out}}[a,b]\big) > 1-\alpha.$$

\section{Simulations}\label{sec:simulation}
All simulations are based on 5000 Monte Carlo independent realizations. First, we check the validity of our theorem and corollaries, and compare our method with \cite{sommerfeld2018confidence} using continuous 1D and 2D dense functional data. Second, we focus on simulations of constructing confidence sets for inverse upper excursion sets in regression data cases, as these are more common in practice. Simulations for estimating inverse upper excursion sets of the mean function on a 2D grid of predictors using linear and logistic regression are conducted. Third, the discrete domain setting is demonstrated by constructing confidence sets for inverse upper excursion sets of coefficients in linear regression. Concurrently, we examine our non-parametric bootstrap SCI algorithm for linear regression under different covariate counts and sample sizes. We conclude with a comparison of the conservativeness of the coverage probabilities when constructing confidence sets for small number levels.

\subsection{Estimation of excursion sets of dense functional data}
We followed same setup as in \cite{telschow2022simultaneous} and generated functional signal-plus-noise 1D and 2D data:
\begin{align*}
    &(\mathbf{1 D}): Y(s)=\sin (8 \pi s) \exp (-3 s)+\frac{(0.6-s)^{2}+1}{6} \cdot \frac{\mathbf{a}^{T} \mathbf{K}^{A}(s)}{\left\|\mathbf{K}^{A}(s)\right\|}, \quad s \in[0,1]\\
    &(\mathbf{2 D}):Y(\s)=s_{1} s_{2}+\frac{s_{1}+1}{s_{2}^{2}+1} \cdot \frac{\mathbf{b}^{T} \mathbf{K}^{B}(\s)}{\left\|\mathbf{K}^{B}(\s)\right\|}, \quad \s=\left(s_{1}, s_{2}\right) \in[0,1]^{2}
\end{align*}
with $\mathbf{a} \sim \mathcal{N}\left(0, I_{7 \times 7}\right)$ and $\mathbf{b} \sim \mathcal{N}\left(0, I_{36 \times 36}\right)$. The vector $\mathbf{K}^{A}(s)$ has entries $K_{i}^{A}(s)=\left(\begin{array}{l}6 \\ i\end{array}\right) s^{i}(1-s)^{6-i}$, the $(i, 6)$ th Bernstein polynomial for $i=0, \ldots, 6$ and $\mathbf{K}^{B}(\s)$ is the vector of all entries from the $6 \times 6$-matrix $K_{i j}(\s)=\exp \left(-\frac{\left\|\s-\boldsymbol{x}_{i j}\right\|^{2}}{2 h^{2}}\right)$ with $\boldsymbol{x}_{i j}=(i, j) / 6$ and $h=0.06$. Examples of sample paths of the signal-plus-noise models and the error fields can be found in  \cite{telschow2022simultaneous}. Figures \ref{fig:1d_dense_CS_showcase} and \ref{fig:2d_dense_CS_showcase} display the true mean function over the space of support, where the 1D functional data was sampled from the corresponding model on an equidistant grid of 200 points of $[0,1]$, and the 2D functional data was simulated on an equidistant grid with 50 points in each dimension. The SCIs were generated through a multiplier bootstrap procedure whose details can be found in \cite{telschow2022simultaneous} Appendix A. Once we have the SCIs, confidence sets are directly obtained by inversion.

\begin{figure}
    \centering
    \includegraphics[width=14cm]{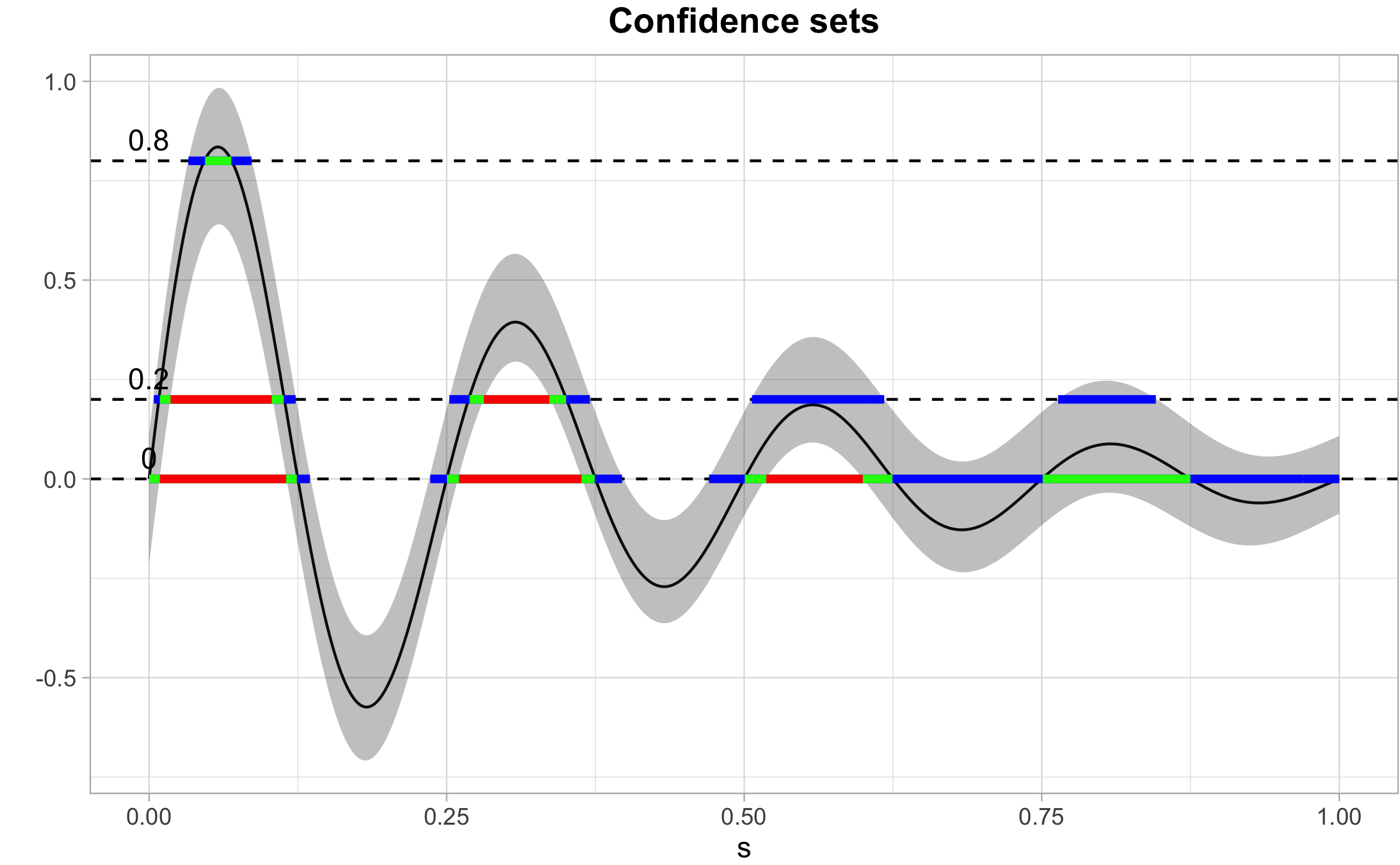}
    \caption{1D dense functional data simulation showcase. Demonstration of using SCB to find regions of $s$ where the true mean is greater than or equal to the three levels $0, 0.2, 0.8$ for 1D dense functional data. The gray shaded area is the 95\% SCB, the solid black line is the true mean. The red horizontal line shows the inner confidence sets (where the lower SCB is greater than the corresponding level) that are contained in the true inverse set represented by the union of the green and red horizontal line (where the true mean is greater than the corresponding levels); the outer confidence sets are the union of the blue, green and red line (where the upper SCB is greater than the corresponding levels) and contain both the true inverse sets and the inner confidence sets.}
    \label{fig:1d_dense_CS_showcase}
\end{figure}

\begin{figure}
    \centering
    \includegraphics[width=14cm]{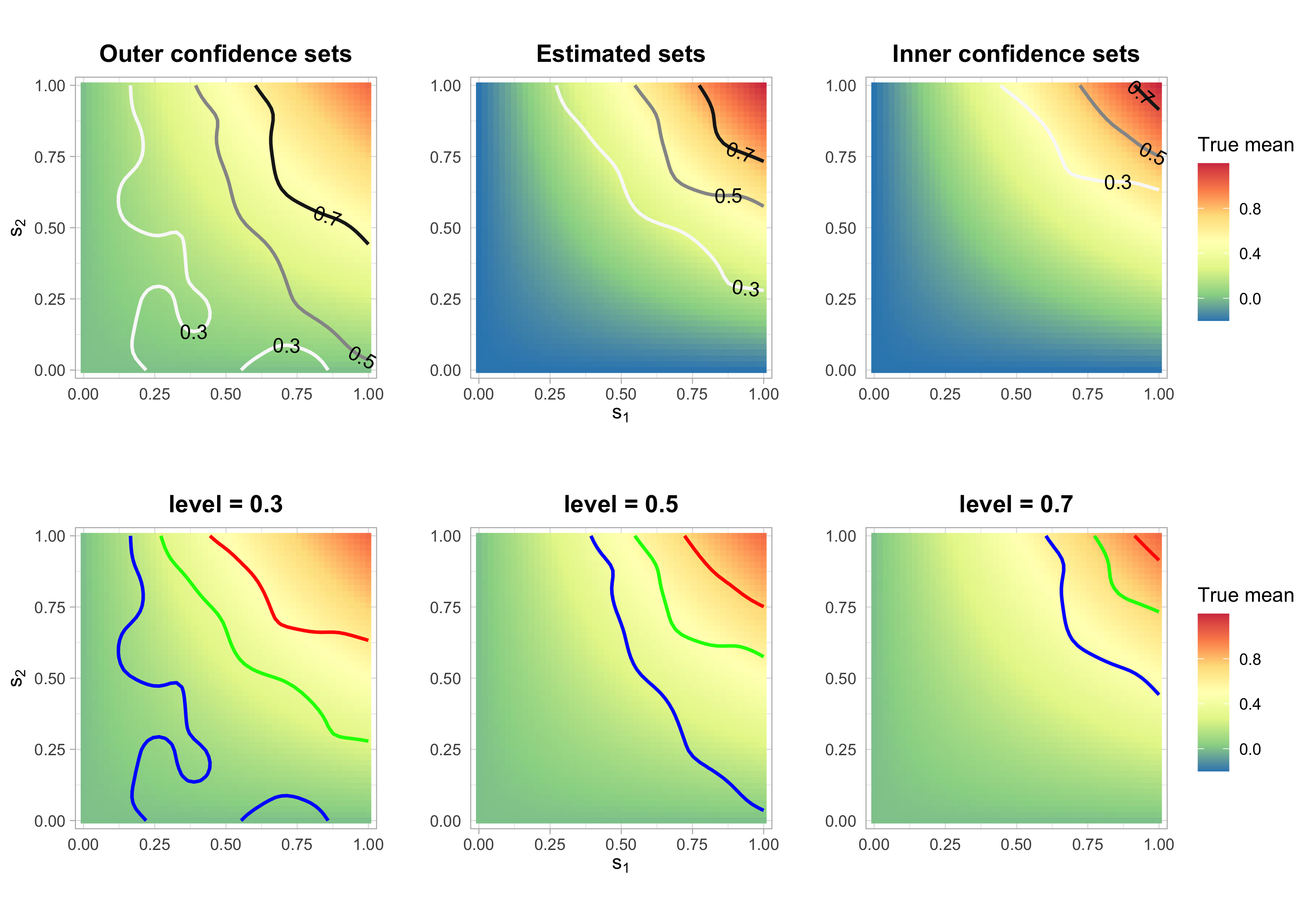}
    \caption{2D dense functional data simulation showcase. The first row displays the contours of the confidence sets in one single plot for the outer confidence sets, estimated inverse set and inner confidence sets, respectively. The three plots in the second row display the contours of the confidence sets for where the true mean is greater or equal to the individual level 0.3, 0.5 or 0.7 respectively. The blue line is the contour of the outer confidence set, the green line is the contour of the estimated inverse set and the red line is the contour of the inner confidence set.}
    \label{fig:2d_dense_CS_showcase}
\end{figure}

Figure \ref{fig:1d_dense_CS_showcase} displays 1D simulated data when the sample size is $N=20$ at each grid point for one realization. Figure \ref{fig:2d_dense_CS_showcase} shows 2D simulated data for one realization of the confidence sets when the sample size is $N=50$. If the researcher wants to find the region where the true mean is greater than or equal to $0.3$, $0.5$ and $0.7$, then the most right plot in the first row of Figure \ref{fig:2d_dense_CS_showcase} will be useful. If the researcher wants to find the region where the true mean is greater or equal to $0.3$ (inside the red contour line) and less than $0.3$ (outside the blue contour line) with 95\% confidence, it would be helpful to investigate the most left panel of the second row of Figure \ref{fig:2d_dense_CS_showcase}.


In Table \ref{tab:dense_sim_result}, we demonstrate the validity of Theorem \ref{theorem:upper} (labeled as Upper in the table), Corollary \ref{corollary:lower} (Lower) and Corollary \ref{corollary: interval} (Interval). The SCI coverage rate is calculated as the percentage of simulation instances such that the true means for each grid point are contained in the corresponding confidence intervals. The coverage rates for the confidence sets are defined as the percentage of simulation instances such that the inner confidence set is contained in the true inverse set, and the true inverse set is contained in the outer confidence set for every pre-defined level $c$. For inverse upper and lower excursion sets, we checked the containment of confidence sets over 5000 equidistant levels $c$ across the range of minimum and maximum of the true mean function. For the inverse interval set, we checked on the set of intervals $(a,b)$ where $a$ and $b$ are sampled at a step size of $0.005$ ranging from the minimum to the maximum of the true mean, forming a grid with the condition $a<b$. From Table \ref{tab:dense_sim_result}, we can see that the coverage rates of the different types of confidence sets were almost exactly the same as the SCI regardless of the sample size for both 1D and 2D dense functional data with the predefined number of levels, validating our theory. 

\begin{table}
    \caption{Simulation coverage rate for 1D and 2D dense functional data. }\label{tab:dense_sim_result}
    \centering
    \begin{tabular}{ccccccccc} 
        \toprule
         & \multicolumn{4}{c}{1D} & \multicolumn{4}{c}{2D} \\
         \cmidrule(lr){2-5} \cmidrule(lr){6-9}
        {$N$} & SCI & Upper & Lower & Interval  & SCI & Upper & Lower & Interval    \\ \midrule
        10 & 94.98 & 95.00 & 95.00 & 95.18 & 95.16 & 95.16 & 95.18 & 95.32   \\
        20 & 94.94 & 95.00 & 95.00  & 95.22 & 95.96 & 95.96 & 96.06 & 96.42  \\
        30 & 95.26 & 95.34  & 95.34 & 95.54 & 94.92 &  94.96 & 94.96 & 95.40 \\
        50 & 95.02 & 95.10  & 95.10 & 95.40 & 94.92 & 94.98 & 94.98 & 95.42   \\
        100  & 94.94 & 94.98 & 94.98  & 95.58 & 94.88 & 94.90 & 94.96 & 95.46   \\
        150 & 94.58 & 94.64 & 94.64  & 95.06 & 95.10 & 95.12 & 95.18 & 95.64  \\
        \bottomrule
    \end{tabular}
    \\[10pt]
    The simulation standard error is 0.006, calculated as the standard error of a Bernoulli random variable with $p = 0.95$ divided by $\sqrt{5000}$ where 5000 is the number of Monte Carlo simulations.
\end{table}

In addition, we checked the conservativeness of the simultaneous confidence set over a different number of levels for inverse upper excursion sets, and the results are displayed in Figure \ref{fig:dense_sim_result}. As the number of levels decreases, the coverage rate increases, and it depends on the coverage rate of SCI at the corresponding sample size for the realized 5000 simulations. For example, the coverage rate of SCI at sample size 20 for 2D data was 95.96\%, which leads to a higher coverage rate of 96.74\% for 1000 levels and 97.32\% for 50 levels. As the number of confidence sets decreased to 5, the coverage rate rose to around 98\% for both 1D and 2D dense functional data. We also compared our method with the asymptotic single-level confidence sets \cite{sommerfeld2018confidence}, adjusted for multiple levels by using Bonferroni adjustments for every single level. Figure \ref{fig:dense_sim_result} shows that the Bonferroni adjusted multiple single level confidence sets yield coverage lower than the nominal level at small sample sizes and produce higher coverage at larger sample sizes. This is expected since this single-level method is only valid asymptotically for a single level when the sample size is large and becomes conservative when adjusted for multiple levels using Bonferroni adjustment. 

\begin{figure}
    \centering
    \includegraphics[width=14cm]{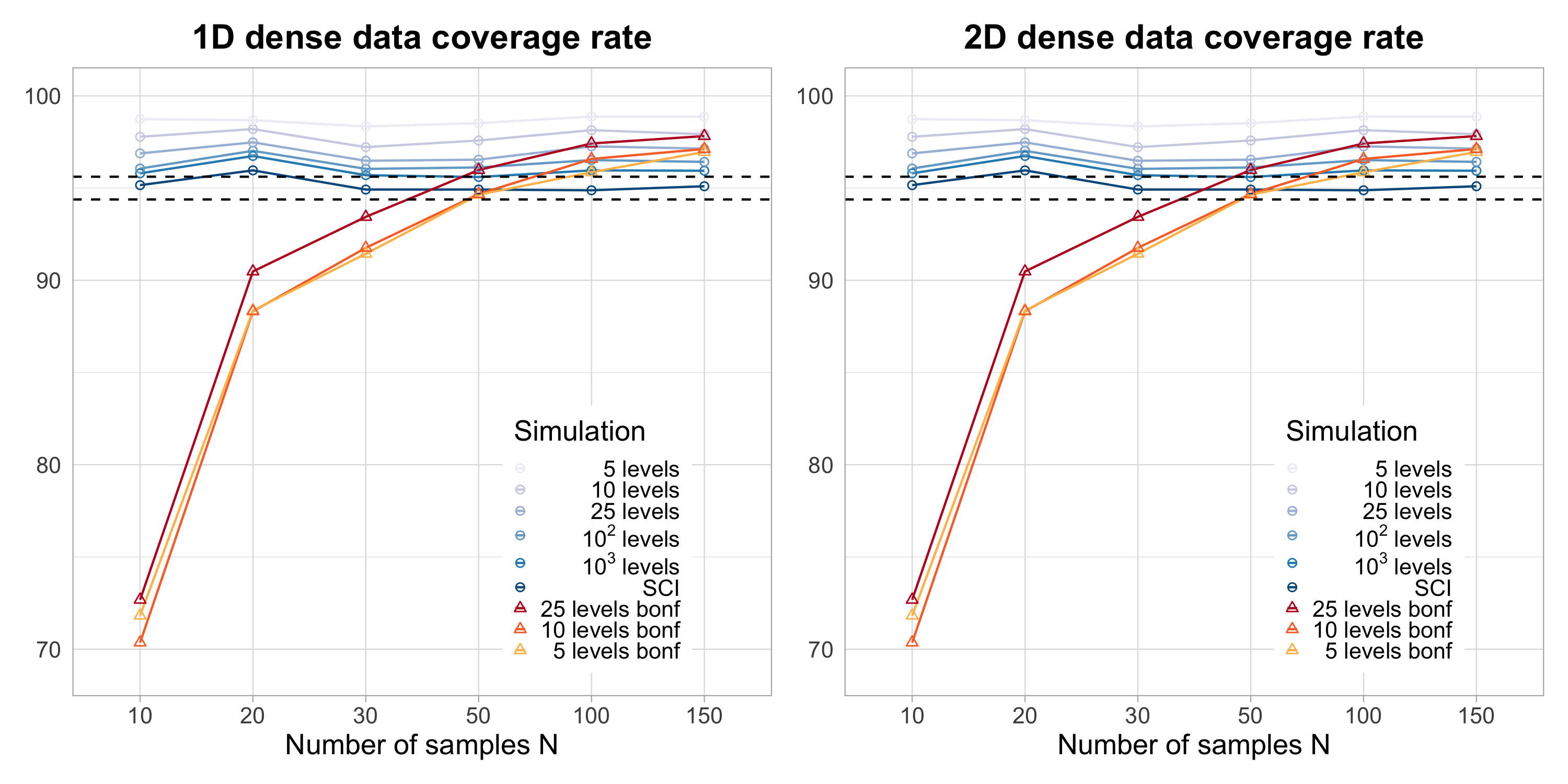}
    \caption{Dense functional data simulation: coverage rate of confidence sets for different number of levels for inverse upper excursion sets. The dashed black line is 95\% plus or minus twice the standard
    error for a Bernoulli random variable with $p = 0.95$ divided by $\sqrt{5000}$.}
    \label{fig:dense_sim_result}
\end{figure}

\subsection{Estimation of excursion sets of regression outcome  }\label{sec:linear_outcome_sim}
We have generated linear and logistic regression data using the following models,
\begin{align}
    &(\mathbf{Linear}): y_i = \beta_0 + \beta_1 x_{i1} + \beta_2x_{i1}^2 + \beta_3x_{i1}^3+\beta_4 x_{i2} + \beta_5 x_{i2}^2 + \beta_6x_{i2}^3 + \epsilon_i \label{linear_eq}\\
    &(\mathbf{Logistic}): log\left(\frac{p_i}{1-p_i} \right) =\beta_0 + \beta_1 x_{i1} + \beta_2x_{i1}^2 + \beta_3x_{i1}^3+\beta_4 x_{i2} + \beta_5 x_{i2}^2 + \beta_6x_{i2}^3\label{logistic_eq},
\end{align}
where $i=1,...,N$ denote the index for subjects and $N$ for training sample size, $\boldsymbol{\beta} = (\beta_0, \beta_1,\beta_2,\beta_3,\beta_4,\beta_5,\beta_6)=(-1,1,0.5,-1.1,-0.5,0.8,-1.1)$, the error $\epsilon_i \iid N(0, 2)$ and $p_i$ denotes the probability of $y_i=1$ for the logistic regression model. The predictors for training the model $\boldsymbol{x}_i = (1, x_{i1},x_{i1}^2,x_{i1}^3, x_{i2}, x_{i2}^2, x_{i2}^3)^T$ are generated from two independent standard normal distributions for $x_{i1}$ and $x_{i2}$, and we denote the design matrix of the training data as $\boldsymbol{X} = (\boldsymbol{x}_1,..., \boldsymbol{x}_n)^T$. The predictions are made on an equidistant grid of 100 points on the interval $[-1,1]$ for the two predictors. We denote by $\Tilde{\boldsymbol{X}}$ the design matrix for the prediction grid, which can also be called the test data design matrix. For our simulation setup, $\Tilde{\boldsymbol{X}}$ is a $10,000$ by $7$ matrix since for each dimension we have 100 points spanning from -1 to 1. Note that the rows of $\Tilde{\boldsymbol{X}}$ are equivalent to $\s$ in the Theory section \ref{sec:theory}.

To generate SCI for the mean outcome on the prediction data grid, we implemented non-parametric bootstrap \cite{freedman1981bootstrapping}. Compared to other bootstrap methods, the non-parametric bootstrap method requires fewer assumptions and is robust under finite sample size setup. We introduce a linear function $f(\boldsymbol{\beta}, \boldsymbol{X})$ that takes in the coefficients vector and design matrix and returns a vector with the same length as the number of rows of $\boldsymbol{X}$. For our simulation setup, this linear functions are the equations \ref{linear_eq} and \ref{logistic_eq} without the errors. The generation process for SCIs of the linear regression mean function is detailed in Algorithm \ref{algorithm:SCI_linear}.

\begin{algorithm}
\caption{SCI for the mean outcome of regression on a fixed test set design matrix}
\begin{algorithmic}[1]\label{algorithm:SCI_linear}
    \REQUIRE Training data outcome $\boldsymbol{y}$ and design matrix $\boldsymbol{X}$. test set design matrix $\Tilde{\boldsymbol{X}}$.
    \REQUIRE Number of Bootstraps $L$ and a fixed function $f(\boldsymbol{\beta}, \boldsymbol{X})$, empty vector $\boldsymbol{r}^{max}$.
    \STATE Obtain $\hat{\boldsymbol{\beta}}$ fitted on the training data $\boldsymbol{y}$ and $\boldsymbol{X}$ using least squares.
    \STATE Calculate the estimated mean outcome vector on the test set design matrix (prediction grid) $\hat{E}(\Tilde{\boldsymbol{y}}):=f(\hat{\boldsymbol{\beta}}, \Tilde{\boldsymbol{X}})$ and the estimated standard deviation $\hat{\boldsymbol{\sigma}}$ of $\hat{E}(\Tilde{\boldsymbol{y}})$. For linear regression, this is the estimated mean outcome on the same scale as $\boldsymbol{y}$, whereas for generalized linear regression, $\hat{E}(\Tilde{\boldsymbol{y}})$ is the estimated linear mean outcome before taking the transformation.
    \FOR{$b$ in $1:L$}
        \STATE Resample with replacement on the training data set: $\boldsymbol{y}_b$, $\boldsymbol{X}_b$.
        \STATE Fit the model on the resampled training data $\boldsymbol{y}_b$, $\boldsymbol{X}_b$, and obtain $\hat{\boldsymbol{\beta}}_b$.
        \STATE Calculate the vector for the estimated mean $\hat{E}(\Tilde{\boldsymbol{y}}_b) := f(\hat{\boldsymbol{\beta}}_b, \Tilde{\boldsymbol{X}})$ and the standard deviation $\hat{\boldsymbol{\sigma}}_{b}$ for every point in the prediction grid.
        \STATE Calculate the of standardized absolute residual vector for every sample in the test set design matrix
        $\boldsymbol{r}_b= \frac{|\hat{E}(\Tilde{\boldsymbol{y}}_b) - \hat{E}(\Tilde{\boldsymbol{y}})|}{\hat{\boldsymbol{\sigma}}_{b}}$. Append the max element of $\boldsymbol{r}_b$ to $\boldsymbol{r}^{max}$.
    \ENDFOR
    \STATE Take the $1-\alpha$ quantile over the empirical distribution generated by $\boldsymbol{r}^{max}$ as our threshold $a$.
    \RETURN SCI on the test set design matrix (prediction grid) as $\left(\hat{E}(\Tilde{\boldsymbol{y}}) - a\times \hat{\boldsymbol{\sigma}}, \hat{E}(\Tilde{\boldsymbol{y}}) + a\times \hat{\boldsymbol{\sigma}}\right)$. For generalized linear regression, transform the SCI back to the data scale using the monotone link function.
\end{algorithmic}
\end{algorithm}

Figures \ref{fig:linear_outcome_showcase} and \ref{fig:logistic_outcome_showcase} display the regression true mean 2D plot with the estimated confidence set contour lines overlayed on top when the training sample size is $N=500$. The outer confidence set, estimated inverse sets and inner confidence sets for all levels are displayed in the first row three plots, and each plot in the second row displays the outer, estimated and inner sets all at once in a single plot for each level. For example, first row and third column of Figure \ref{fig:linear_outcome_showcase} displays the estimated inner confidence sets for level $-1$, $-0.5$ and $0$ simultaneously. For any value of $-1<x_1<1$ and $-1<x_2<-0.75$, the true mean of the linear function conditioned on the two predictors is greater than $0$ with at least 95\% confidence. In Figure \ref{fig:logistic_outcome_showcase} second row and second column, we can see that for any value of $x_1>0$ and $x_2 >0.9$, we have at least 95\% confidence that the true probability of classifying $y=1$ is greater than 50\%, which is where the inner confidence set for level $0.5$ is. We also know that if $-0.8<x_1<0$ and $x_2>-0.4$, then we have at least 95\% confidence that the true probability is less than 50\% of classifying $y=1$, which is the region outside the outer confidence set (blue line). This is a much more intuitive interpretation of the predictors' effect than just interpreting coefficients.

In Figure \ref{fig:regre_outcome_result}, we displayed the coverage rate for different numbers of levels for both the linear and logistic regression. The coverage rate for 5 levels increased only around 1\% for linear regression and around 2\% for logistic regression compared to SCI coverage rate. This is due to the high correlation between the estimated means in the prediction grid, which we address in section \ref{sec:cor}. 

The coverage rate for SCI maintains near the nominal 95\% level even for 100 number of samples for both linear and logistic regression for our setup with 6 covariates, as shown in Figure \ref{fig:regre_outcome_result}. This shows the robustness under finite sample of our non-parametric bootstrap algorithm. We also demonstrated that the constructed SCI works for discrete covariates space as shown in Table \ref{tab:grid_point_linear}.

\begin{figure}
    \centering
    \includegraphics[width=14cm]{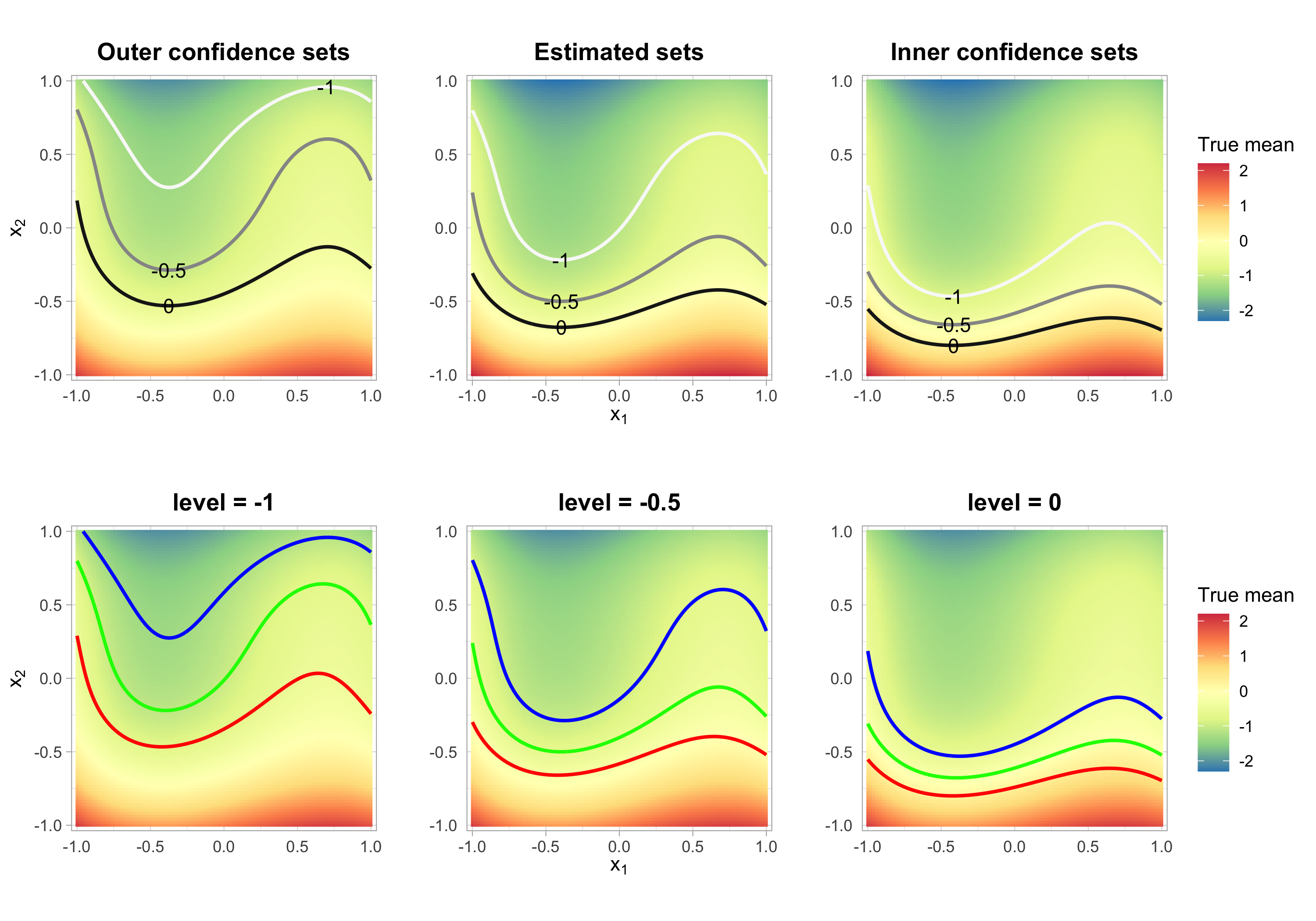}
    \caption{Confidence sets for linear regression predicted mean estimators on a 2D grid. The first row displays the contours of the three levels in one single plot for the outer confidence sets, estimated inverse upper excursion set and the inner confidence sets, respectively. The three plots in the second row display the confidence sets for where the true mean is greater or equal to the individual level -1, -0.5 or 0 respectively. The blue line is the contour of the outer confidence set, the green line is the contour of the estimated inverse set and the red line is the contour of the inner confidence set. The predictions are made on a grid of equidistant grid of 100 points on $[-1,1]$.}
    \label{fig:linear_outcome_showcase}
\end{figure}

\begin{figure}
    \centering
    \includegraphics[width=14cm]{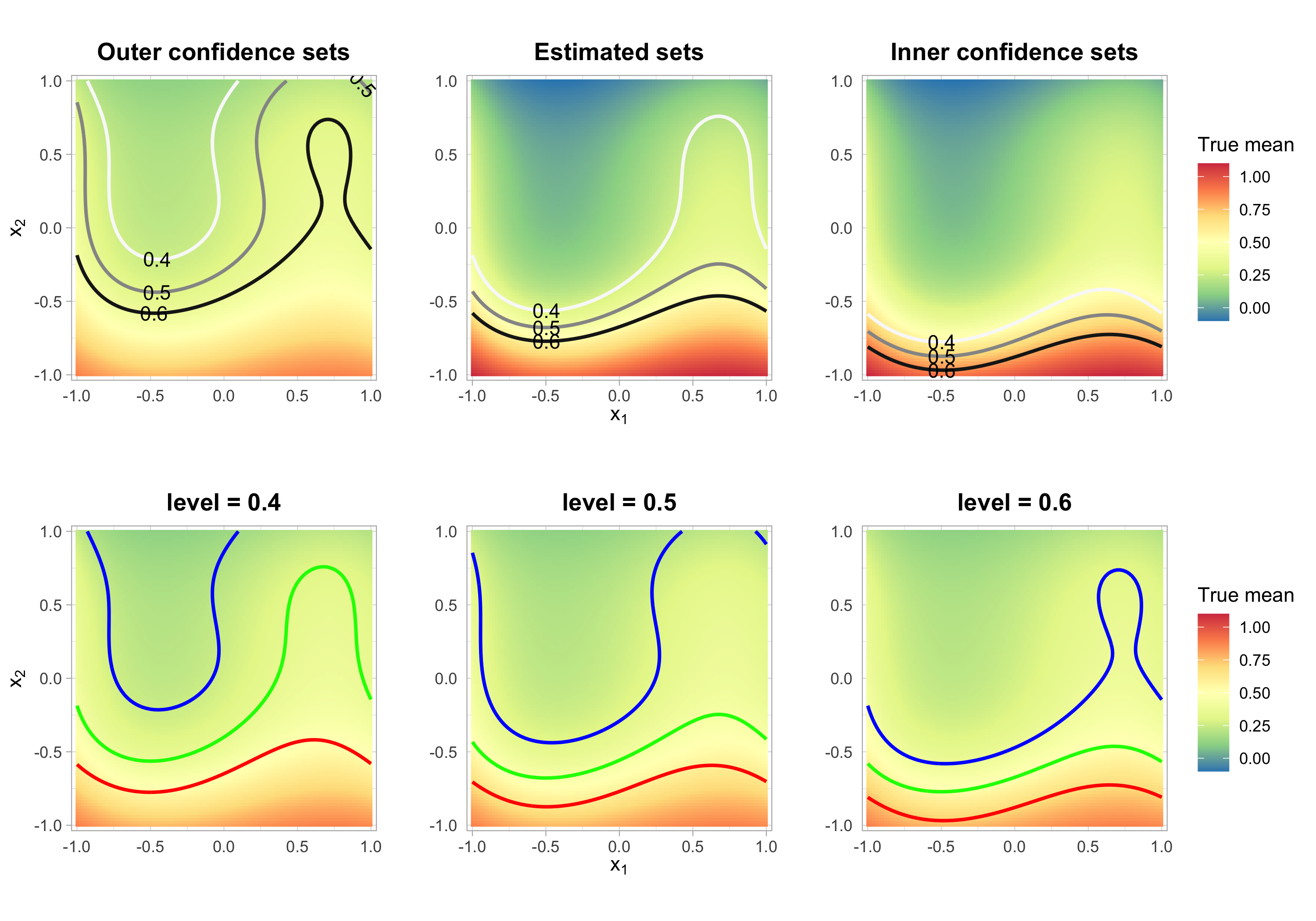}
    \caption{Confidence sets for logistic regression predicted mean estimator for probability on a 2D grid. The first row displays the contours of the three levels in one single plot for the outer confidence sets, estimated inverse upper excursion set and the inner confidence sets, respectively. The three plots in the second row display the confidence sets for where the true mean is greater or equal to the individual level 0.4, 0.5 or 0.6 respectively. The blue line is the contour of the outer confidence set, the green line is the contour of the estimated inverse set and the red line is the contour of the inner confidence set. The predictions are made on a grid of equidistant grid of 100 points on $[-1,1]$.}
    \label{fig:logistic_outcome_showcase}
\end{figure}

\begin{figure}
    \centering
    \includegraphics[width=14cm]{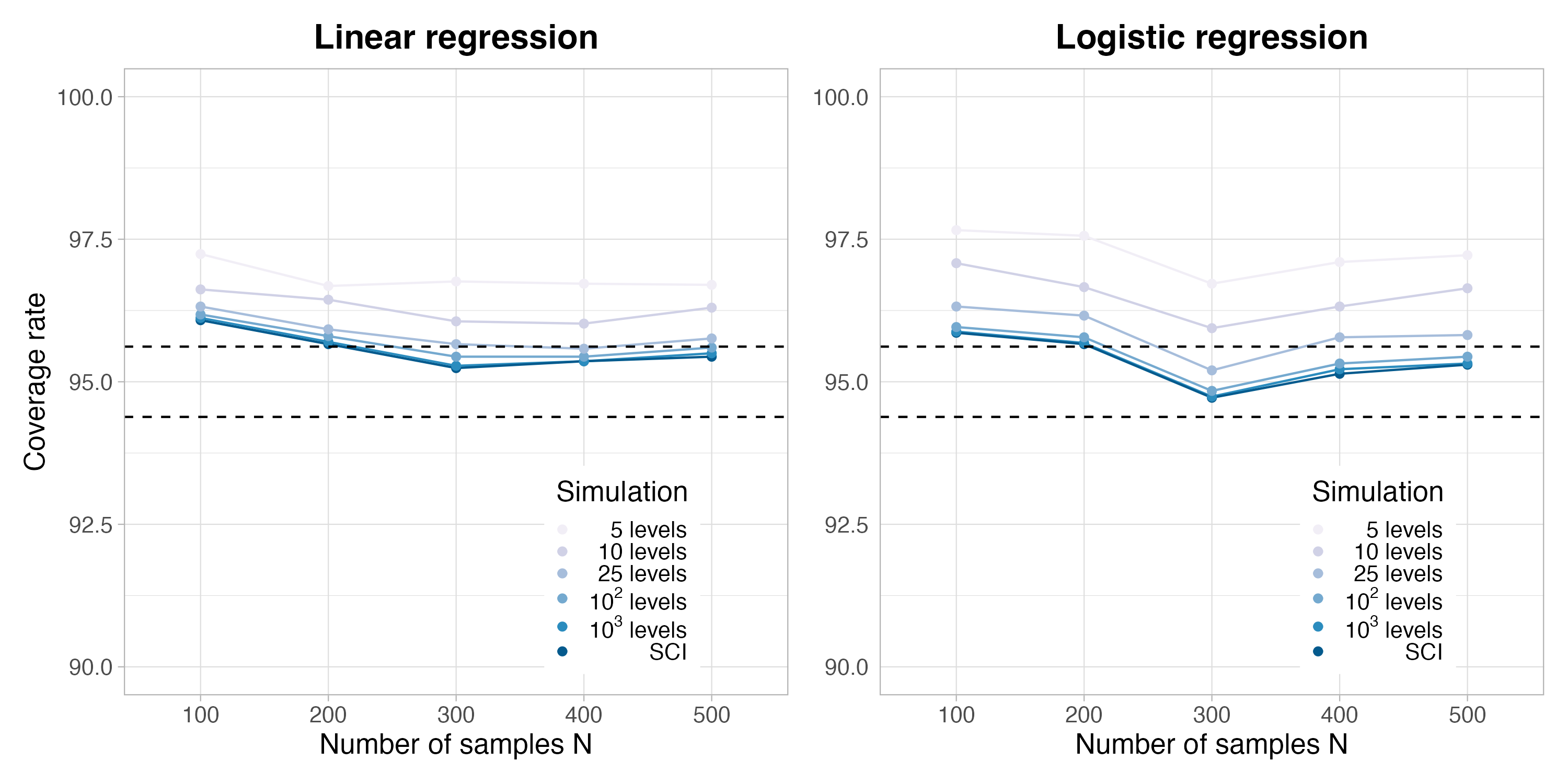}
    \caption{Regression outcome simulation confidence sets coverage rate  for different number of levels for inverse upper excursion sets. The dashed black line is 95\% plus or minus twice the standard
    error for a Bernoulli random variable with $p = 0.95$ divided by $\sqrt{5000}$.}
    \label{fig:regre_outcome_result}
\end{figure}

\subsection{Estimation of excursion sets of regression coefficients}
This simulation for regression coefficients demonstrates the validity of confidence sets on discrete domain by using the SCI for coefficients in linear regression. To flexibly control the number of coefficients, we generated the data under the following model:
$$
y_i = \beta_0 + \sum_{j = 1}^{M-1} \beta_j x_{ij} +\epsilon_i
$$
where $\boldsymbol{\beta}= (\beta_0, \beta_1,...,\beta_{M-1}) \sim N(\boldsymbol{0}, \boldsymbol{I})$ are generated once and fixed for all simulations instances, $\boldsymbol{x}_i=(x_{i1},x_{i2},...,x_{i,M-1}) \sim N(\boldsymbol{0}, \boldsymbol{\Sigma})$ are generated randomly for every new simulation instances, and $\boldsymbol{\Sigma}$ is a auto-regressive covariance matrix with an order of 1, decay factor $\rho=0.4$ and variance of 1 on the diagonal. The irreducible error $\epsilon_i$ follows an independent standard normal. The SCIs of the coefficients were generated similarly to the regression outcome SCI by using non-parametric bootstrapping as shown in Algorithm \ref{algorithm:SCI_linear}. 

Figure \ref{fig:linear_coeff_showcase} shows the confidence sets estimation for the 50 coefficients in one realization of the simulations when $N=500$. The red points are the inner confidence sets for each level which are contained in the true inverse upper excursion sets (green + red points) that are contained in the outer confidence set (blue+green+red points). 

We investigated how the coverage rate changes with both sample size and the number of coefficients in the model for different number of levels. As shown in Figure \ref{fig:regre_coeff_result}, the coverage rates for a small number of levels are more conservative than both the dense functional and regression outcome model and do not vary with either the number of coefficients or the number of samples. The overall conservativeness for the finite number of levels is due to the low correlation between the estimated coefficients as discussed in Section \ref{sec:cor}.

The SCI coverage rate maintained well above the nominal 95\% level, even when the number of sample size is 500 and number of coefficients is 200, as shown in Figure \ref{fig:regre_coeff_result}. This again reinforced the fact that the non-parametric algorithm for linear regression coefficients is robust under finite sample size. 

\begin{figure}
    \centering
    \includegraphics[width=14cm]{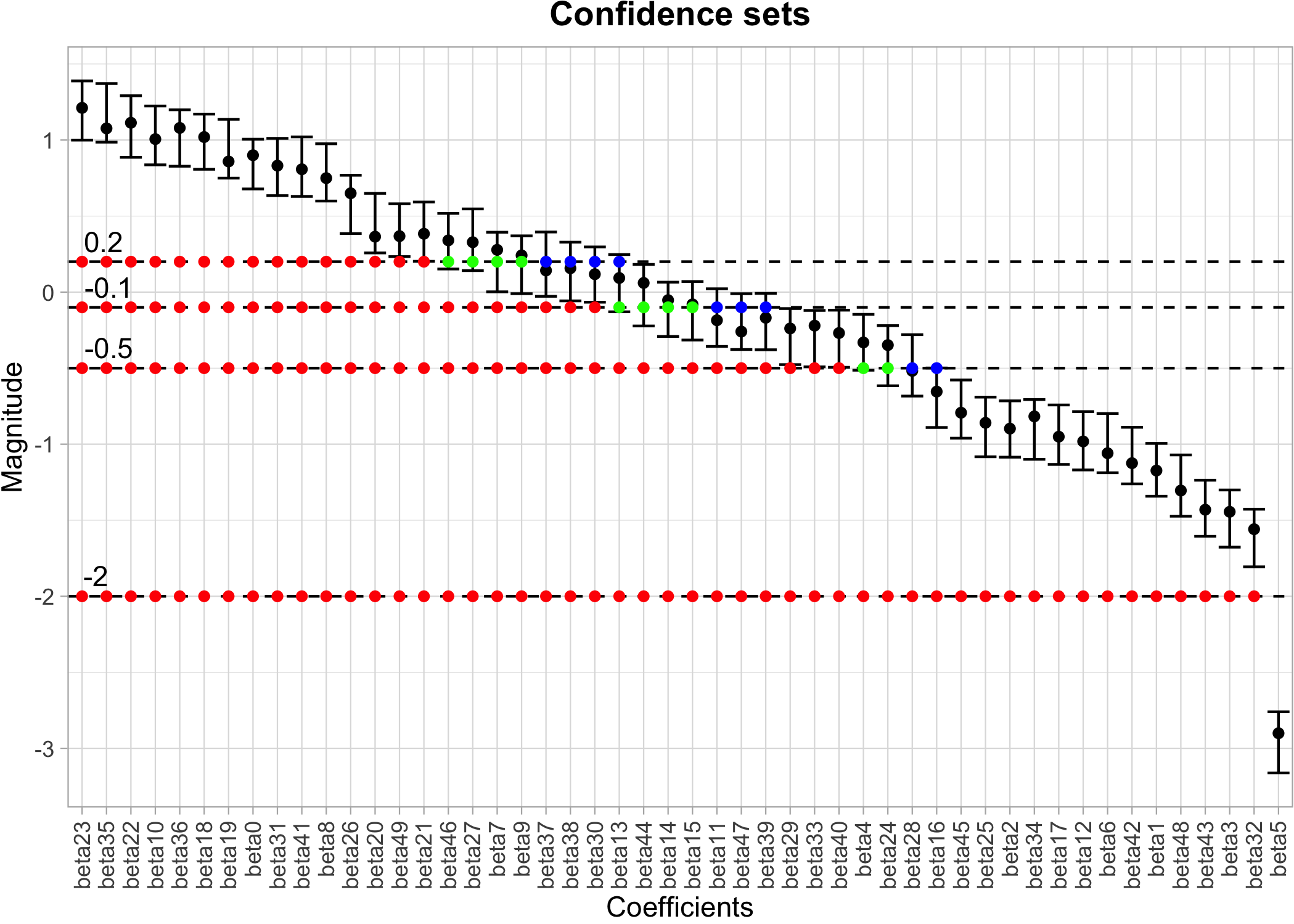}
    \caption{Confidence sets for inverse upper excursion sets of coefficients in the discrete linear regression example. The black points indicate the true means ordered by the lower interval values from left to right. For each level, the red points indicate the inner confidence sets, which are contained in the true inverse upper excursion sets (green+red points) that are contained in the outer confidence set (blue+green+red points).}\label{fig:linear_coeff_showcase}
\end{figure}

\begin{figure}
    \centering
    \includegraphics[width=14cm]{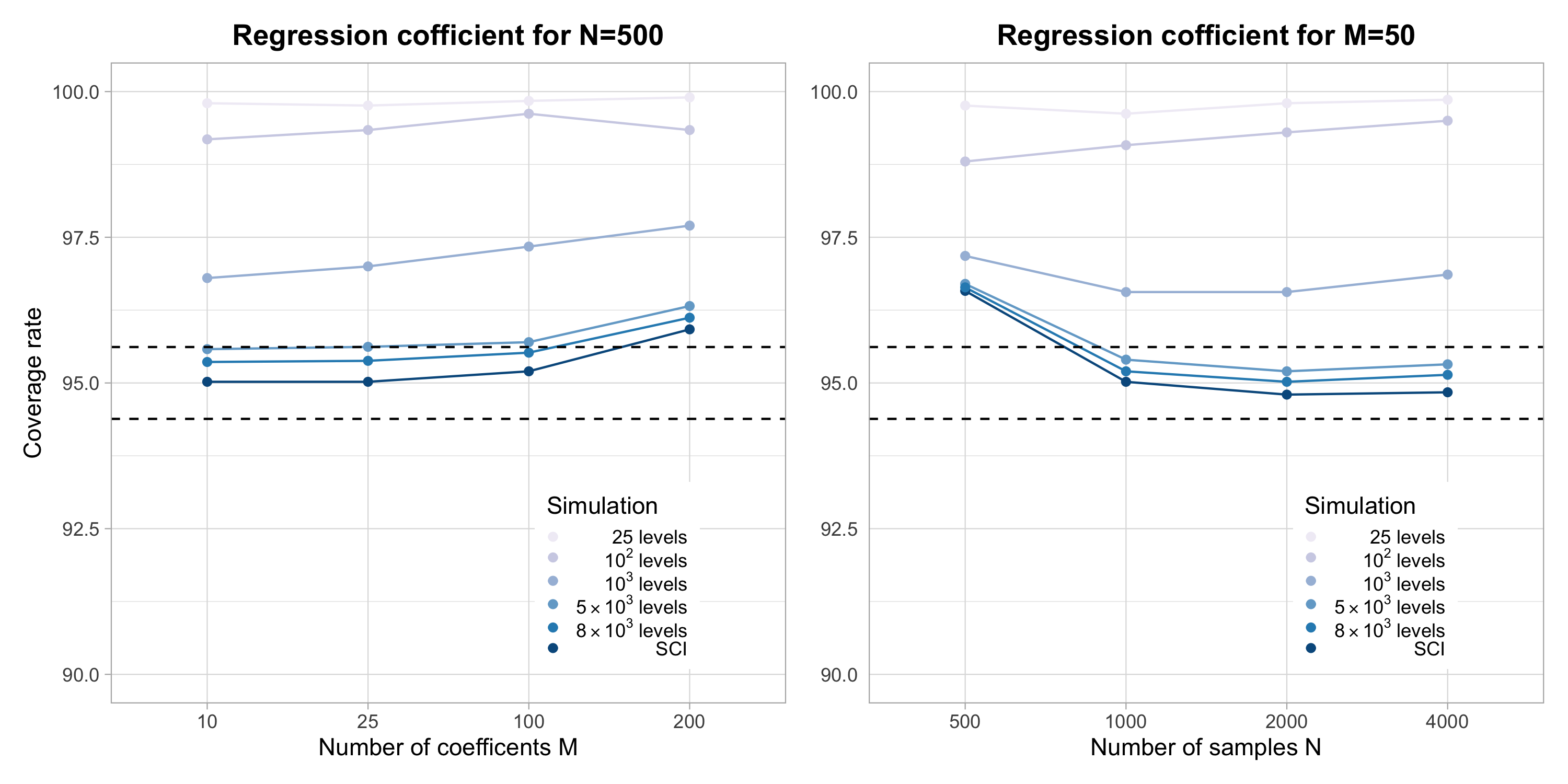}
    \caption{Regression coefficient simulation confidence sets coverage rate for different number of levels for inverse upper excursion sets. The dashed black line is 95\% plus or minus twice the standard
    error for a Bernoulli random variable with $p = 0.95$ divided by $\sqrt{5000}$.}
    \label{fig:regre_coeff_result}
\end{figure}

\subsection{Conservativeness of confidence sets depends on correlations of the estimators}\label{sec:cor}
Once the coverage rate of the CSs is above the nominal level, additional increase of coverage rate decreases the power. Decreased power leads to smaller inner set and larger outer set that indicate a less precise estimation of the true inverse set. Therefore, this section investigates how the coverage probability of confidence sets for a finite number of levels changes with the correlations between the estimator $\hat \mu(s)$ in the domain $\mathcal{S}$. In Figure \ref{fig:corr_diff}, we displayed the pairwise absolute correlation density of the estimators for the simulation setup as shown in Figures \ref{fig:linear_outcome_showcase}, \ref{fig:logistic_outcome_showcase}, and \ref{fig:linear_coeff_showcase} for linear grid prediction means, logistic grid prediction means and discrete coefficients. For example, the green line shows the density for the correlations between every pair of the mean predictions in the grids for linear regression in one simulation instance. We can see most pairs of the discrete coefficient estimates have low absolute correlation at around 0.10, whereas the linear regression grid prediction means estimators' absolute correlations concentrates at around 0.75 leading to the least conservative coverage for finite levels of confidence sets. The logistic regression prediction mean estimators' absolute correlations are less concentrated at higher values and thus they produce more conservative coverage than the linear regression even though both the linear and logistic models are generated with the same underlying linear model.

An enlightening example in the linear regression setting demonstrates that the conservativeness is more dependent on the correlations of the estimators instead of the number of the estimators that confidence sets are constructed for. The correlation matrix of the coefficient estimators is:
$$
\mathrm{cov}(\hat{\boldsymbol{\beta}})= \sigma^2 (\boldsymbol{X'X})^{-1}
$$
where $\sigma^2$ is the error variance and $\boldsymbol{X}$ is the training design matrix. Then, the pairwise correlation between the mean estimator $\hat y_i$ and $\hat y_j$ for the two points $\boldsymbol{x_i}$ and $\boldsymbol{x_j}$ in the testing grid is:
\begin{equation}
    \mathrm{cor}(\hat y_i, \hat y_j)= \frac{\boldsymbol{x_i}(\boldsymbol{X'X})^{-1}\boldsymbol{x_j}'}{\sqrt{\boldsymbol{x_i}(\boldsymbol{X'X})^{-1}\boldsymbol{x_i}'}\sqrt{\boldsymbol{x_j}(\boldsymbol{X'X})^{-1}\boldsymbol{x_j}'}} 
    \label{eq:cor_regression}
\end{equation}

If $\boldsymbol{x_i}$ and $\boldsymbol{x_j}$ are close in Euclidean distance, then the correlation of the prediction mean estimator would be extremely close to 1 regardless of the correlation of the coefficients $cov(\hat{\boldsymbol{\beta}})= \sigma^2 (\boldsymbol{X'X})^{-1}$. This is because the multiplication of the two square roots in the denominator would have similar value as the numerator in Equation \ref{eq:cor_regression}. We constructed linear regression outcome simulations under the same model setup as above but with the step size of 0.02 between the prediction grid points and only vary the number of grid points for a fixed fitting data sample size of 300. For example, for 5 grid prediction points in one dimension, the grid points would be (-0.04,-0.02,0,0.02,0.04). As the number of grid points increases to 100, that would be the same simulation setup in section \ref{sec:linear_outcome_sim}, thus we omit it from this simulation setup. Table \ref{tab:grid_point_linear} illustrates that if the points are closer together (thus high correlated), the coverage rate of the confidence sets for finite levels would be close to the coverage rate for SCIs regardless of the number of estimators we are constructing.

\begin{figure}
    \centering
    \includegraphics[width=14cm]{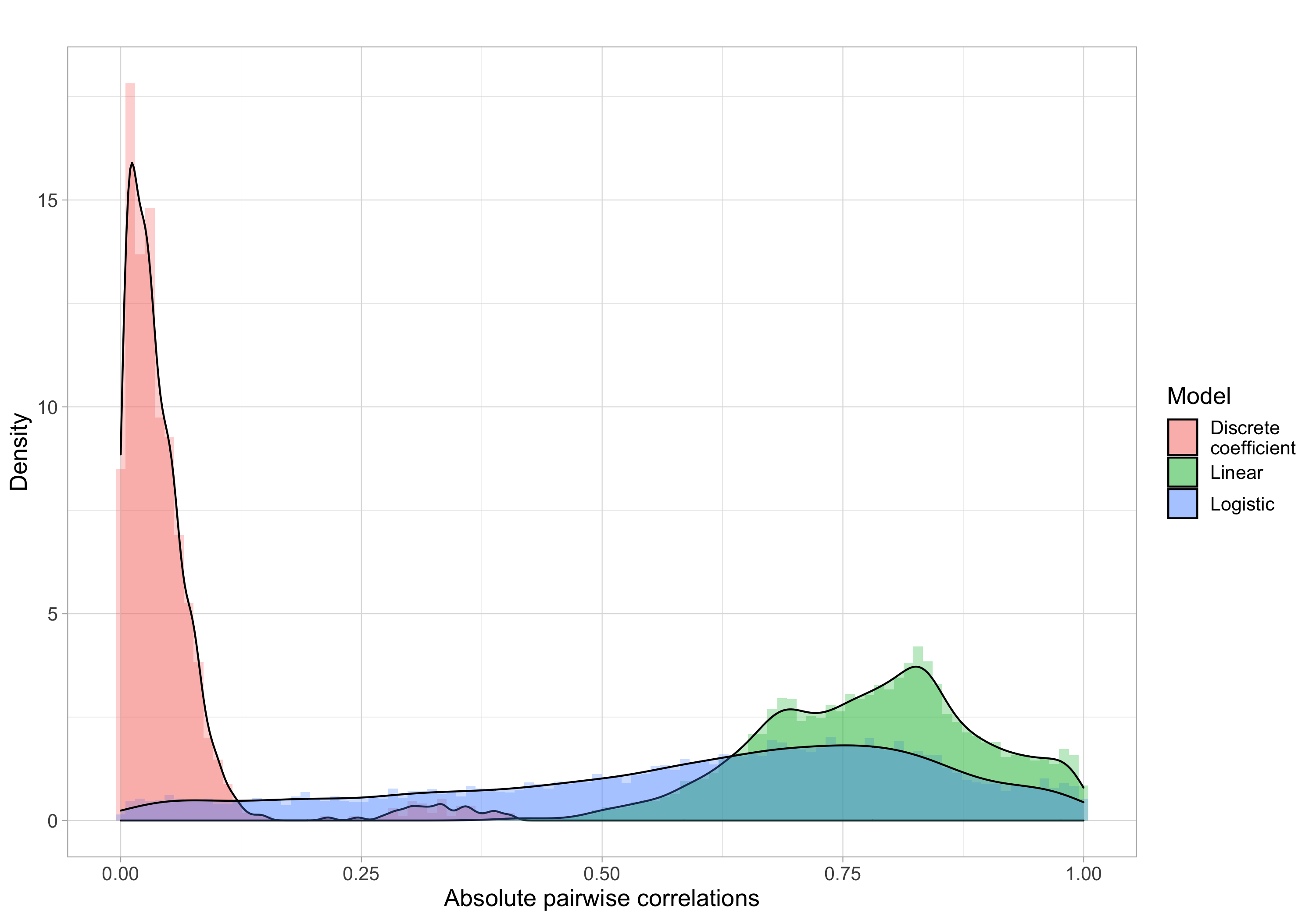}
    \caption{Absolute pairwise correlations of the estimators for different models}
    \label{fig:corr_diff}
\end{figure}

\begin{table}[]
    \centering
    \caption{Simulation coverage rates for fix grid step size for linear regression model set up as shown in Equation \ref{linear_eq}.}
    \begin{tabular}{cccccc} 
        \toprule
        {\# of grid points} & 5 levels & 25 levels & 100 levels & 1000 levels & SCI  \\ \midrule
        5  & 95.70 & 95.48 & 95.26 & 95.16 & 95.16 \\
        10  & 95.28  & 94.88 & 94.80  & 94.66 & 94.62  \\
        20  & 95.96  & 95.38 & 95.20  & 95.08 & 95.06  \\
        50  & 95.82  & 95.14 & 94.98  & 94.94 & 94.94   \\
        80  & 96.50  & 95.40 & 95.36  & 95.20 & 95.18  \\
        \bottomrule
    \end{tabular}
    \\[10pt]
    The simulation standard error is 0.006, calculated as the standard error of Bernoulli random variable with $p = 0.95$ divided by $\sqrt{5000}$ where 5000 is the number of Monte Carlo simulations.
    \label{tab:grid_point_linear}
\end{table}

\section{Applications}\label{sec:wide_application}
\subsection{Excursion set maps for climate data}
With global warming emerging as a serious worldwide environmental issue, it is of interest to assess which geographical regions are particularly at high risk of an increase in temperature. Two sets of 29 spatially registered arrays of mean summer temperatures (June-August) produced by the WRFG climate model as part of the North American Regional Climate Change Assessment Program (NARCCAP) project \cite{mearns2007north,mearns2009regional} are given at a fine grid of fixed locations 0.5 degrees apart in geographic longitude and latitude over North America over two time periods: late-20th century (1971–1999) and mid-21st century (2041-2069). Here we consider the problem of determining which regions have a mean difference in temperature greater than a certain value between the two time periods for summer. This value was set to be $2^\circ C$ in \cite{rogelj2009halfway,anderson2011beyond, sommerfeld2018confidence}, but the value is rather arbitrary. Why would a difference of $1^\circ C$ or even $4^\circ C$ not be of greater importance? The purpose of the analysis presented here is to show how more excursion thresholds can be explored without losing error control.

We followed the same modeling method approach as in \cite{sommerfeld2018confidence}. Briefly, we consider a point-wise linear model with an auto-regressive (AR1) correlation structure for the correlation between different years within every grid point (location):
$$
\begin{array}{ll}
Y_{j}(\s)=T^{(a)}(\s)+m^{(a)}(\s) t_{j}^{(a)}+\epsilon_{j}(\s), & j=1, \ldots, n^{(a)} \\
Y_{j}(\s)=T^{(b)}(\s)+m^{(b)}(\s) t_{j}^{(b)}+\epsilon_{j}(\s), & j=n^{(a)}+1, \ldots, n^{(a)}+n^{(b)}
\end{array}
$$
where $n^{(a)}=n^{(b)}=29$ are the number of years within the "past" and "future" periods. Here $t_j$ is the year number normalized to have mean 0 within each "past" and "future" periods, and the model coefficients are $T(\s)$ and $m(\s)$. Our main interest is to estimate the difference $T^{(b)}(s) -T^{(a)}(s)$. The $90\%$ SCIs were obtained through multiplier bootstrap which accounted the correlations between different locations. For more details, please refer to  \cite{sommerfeld2018confidence}.

In the first row of Figure \ref{fig:climate_result}, we display the point estimate of inverse upper excursion set, outer and inner confidence sets of temperature difference 1, 1.5, 2, 2.5 and 3 Celsius. For example, we are at least $90 \%$ confident that most of the United States and the northern part of Mexico have a summer temperature difference less than $3^\circ C$ as seen in the outer confidence set plot. On the other hand, we are at least $90 \%$ confident that the Rocky Mountains and the Sierra Madre Occidental mountains of Mexico are at risk of exhibiting warming of $2^\circ C$ or more in the given time period which can be seen in the inner confidence set plot.

We also compare our simultaneous finite sample method to the asymptotic single level confidence set in \cite{sommerfeld2018confidence} using the same data at a temperature difference of $2^\circ C$. Taking the ratio of the SCI quantiles (the value $a$ in Algorithm \ref{algorithm:SCI_linear}) calculated from the multiplier bootstrap for the two methods (single level confidence set used a subset of the support for multiplier bootstrap whereas simultaneous confidence used all the points on the support for bootstrap), the simultaneous confidence set threshold is around 127\% (3.8/3) larger than single-level confidence set's threshold value. The position and size of the outer and inner confidence set from the two methods do not differ substantially which can be observed by comparing the level = $2^\circ C$ plot in Figure \ref{fig:climate_result} with Figure 1 in \cite{sommerfeld2018confidence}. This leads to a similar interpretation of the results.

\subsection{Prediction uncertainty quantification for severe outcome of COVID and non-COVID patients}
Coronavirus disease 2019 (COVID-19) has caused significant morbidity and mortality worldwide. Statin medications used by cardiovascular disease (CVD) patients may have a protective effect against severe COVID due to their anti-inflammatory effects \cite{castiglione2020statin}. \cite{daniels2020relation} performed a retrospective single-center study of all patients hospitalized at University of California San Diego Health between February 10, 2020 and June 17, 2020 (n = 170 hospitalized for COVID-19, n = 5,281 COVID-negative controls). Here we will use the same data and a similar multiple logistic regression model to showcase our confidence set method. For more details of the data, please refer to \cite{daniels2020relation}.

The binary outcome of interest is severe outcome of the admitted patient, defined as either admission to the intensive care unit (ICU) or death. The main predictor is whether the patient is taking statin medications or not, and we have adjusted for potential confounders such as angiotensin-converting enzyme (ACE) inhibitors, angiotensin II receptor blockers (ARB), sex, age at diagnosis, chronic kidney disease (CKD), hypertension, cardiovascular disease (CVD), diabetes and obesity. Instead of only investigating the main effect of statin in the COVID positive patients, we pooled the COVID negative and positive population (total sample size of $n=5451$) and added an interaction term between COVID positive indicator and statin medication in the logistic regression model. That is, the log odds of the severe outcome are modeled as a linear function of statin, COVID positive indicator, the interaction between the two, and the remaining confounders. 

The use of statin [adjusted odds ratio (aOR) 0.78, confidence interval (CI) 0.66 to 0.93] is associated with decreased probability of severe outcome, whereas COVID indicator (aOR 4.08, CI 2.82 to 5.95) was associated with an increased probability of severe outcome, and the interaction term for COVID indicator and use of statin was marginally significant (aOR 0.51, 0.25 to 1.05) meaning that statin was more protective in the COVID positive patients. Age at diagnosis was a notable contributor to severe outcome with 2\% (CI 1\% to 3\%) increase in the adjusted odds ratio for one year increase of age at diagnosis. 

In order to visualize and interpret the odds ratios for statin, COVID indicator and age in a meaningful way, we constructed confidence sets for inverse upper excursion sets of three levels of severe outcome probability 0.4, 0.5 and 0.6 on the continuous prediction grid of age spanning from 20 to 100 with a step size of 0.1, and on the discrete prediction grid of COVID positive or not and whether taking statin or not, as shown in Figure \ref{fig:covid_confidence_set}. We fixed other variables at ACE = 0, ARB = 0, sex = Male, CKD = 1, hypertension=1, CVD = 1, diabetes=1, obesity = 1. We use Algorithm \ref{algorithm:SCI_linear} to construct the SCIs for the probability of severe outcome on the prediction grid. Then, the corresponding three confidence sets are built and demonstrated in Figure \ref{fig:covid_confidence_set}. If any patient's characteristic falls within the red solid line, the probability of severe disease will be higher than the corresponding level, whereas if any patient's characteristic falls outside the colored (blue+green+red) solid line, the probability of severe disease will be lower than the corresponding level. These assertions hold for all levels and grid points simultaneously with 95\% confidence, or 5\% probability of making any mistakes, assuming that the model is correct. For example, we are confident that patients whose age is from 40 to 60 with COVID and not taking statin (Figure \ref{fig:covid_confidence_set}, top right, red set) will have a higher than 40\% probability of getting severe outcome, whereas the same aged patients without COVID and taking statin (Figure \ref{fig:covid_confidence_set}, bottom left, outside of blue set) will have less than 40\% probability, conditioned on the other variables at the fixed values.

\section{Discussion}\label{sec:discussion}
We have proposed an innovative approach to construct simultaneous confidence sets of multiple thresholds for quantifying the uncertainty in estimating the inverse set. Previously, inverse set estimation methods have often been limited to data supported on a continuous domain such as dense functional spatial data. We demonstrated that inverse set estimations can be used and provide insightful inference in other continuous or discrete data scenarios such as regression prediction and finding the coefficients with values greater than certain threshold levels. 

In addition, our simultaneous confidence set method solved the dilemma of which threshold level to choose for the inverse set. This is especially important since there is often no universal consensus on which threshold level to apply, even within a well-defined problem. Our construction allows the analyst to choose the excursion level freely, without concern for inflating the error rate. This comes with the drawback that our method is conservative when applied to a small number of excursion levels, which leads to loss of power. However, we empirically demonstrated that this conservativeness depends on the specific model and data. For confidence sets of linear regression predictions, the type I error is very close to the nominal 5\% with only 5 levels in the simulation study. Furthermore, if a slight decrease in power is not of great importance, our simultaneous method outshines the single-level method in three major aspects: our method is not asymptotic and valid for finite samples, produces valid confidence sets for multiple levels, and can be widely applicable to different kinds of data. 

We proposed a non-parametric bootstrap algorithm, supplemented by R code, for constructing SCIs in multiple regression and provide a comprehensive evaluation of its performance demonstrating its robustness under finite sample size and control of Type I error rate. To the best of our knowledge, this is first paper to provide implementable R code and comprehensive evaluation for constructing SCIs for multiple regression using non-parametric bootstrap.

The potential of our method is not fully explored, since it is possible to apply our method to other statistical procedures that output SCIs. Confidence sets can be built by inverting the SCIs without additional assumptions or computational costs for these methods. This will aid inference and interpretation in applications such as survival analysis \cite{mckeague2002simultaneous}, longitudinal data analysis \cite{ma2012simultaneous} and genetic SNP effect analysis \cite{park2007simultaneous, qiu2007sharp, hwang2013empirical}.

\begin{appendix}\label{sec:appendix}
\section{Appendix: Proofs}
\subsection{Theorem \ref{theorem:upper}}
\begin{proof} 
From the definition of inner confidence set, outer confidence set and inverse set, the first equality follows:
\begin{align*}
&\quad \quad \mathbb{P}\Big(\forall c \in \mathbb{R}: \hat B_l^{-1}(U_c) \subseteq \mu^{-1}(U_{c}) \subseteq \hat B_u^{-1}(U_c) \Big) \nonumber \\
&=\mathbb{P}\left(\forall c \in \mathbb{R}: \Big( \big\{ \hat{B}_l(\s)\geq c \big\} \subseteq \big\{ \mu(\s) \geq c \big\} \Big) \wedge \Big( \big\{ \mu(\s) \geq c \big\} \subseteq \big\{ \hat{B}_u(\s) \geq c \big\} \Big) \right) \\
&=\mathbb{P} \left(\forall \s \in \mathcal{S}: \hat B_l(\s)\leq \mu(\s)  \leq \hat B_u(\s) \right).
.\end{align*}

To show the second equality, we need to show that the following two events are equivalent:
$$
E_1 =\Big \{ \forall c \in \mathbb{R}: \Big( \big\{ \hat B_l(\s) \geq c \big\} \subseteq \big\{ \mu(\s) \geq c \big\} \Big) \wedge \Big( \big\{ \mu(\s) \geq c \big\} \subseteq \big\{ \hat B_u(\s) \geq c \big\} \Big) \Big \}
$$
$$
E_2 = \Big \{  \forall \s \in \mathcal{S}:  \mu(\s) \geq  \hat B_l(\s) ~\wedge~ \mu(\s) \leq \hat B_u(\s) \Big \}
$$

First, let's show $E_2$ implies $E_1$. Assume $E_2$ happened, then for any fixed $c \in \mathbb{R}$,  

$\forall \s \in \mathcal{S}$ such that $\hat B_l(\s)\geq c$, then $\mu(\s)\geq \hat B_l(\s)\geq c$, 

$\forall \s \in \mathcal{S}$ such that $\mu(\s)\geq c$, then  $\hat B_u(\s) \geq \mu(\s)\geq c$.\\

Second, let's show $E_1$ implies $E_2$. Proof by contradiction. Assume $E_1$ happened and $\exists \s' \in \mathcal{S}$ s.t. 
$$\hat B_l(\s')> \mu(\s') \vee \hat B_u(\s') < \mu(\s') \text{ ($E_2$ does not hold)}$$
Then, any $c' \in (\mu(\s'),\hat B_l(\s')]$ or $c' \in (\hat B_u(\s'), \mu(\s')]$,
$$\hat B_l(\s') \geq c' \text{ but } \mu(\s')<c'$$ or
$$\mu(\s')\geq c' \text{ but } \hat B_u(\s') < c'$$
Contradiction to $E_1$.
\end{proof}

\subsection{Lemma \ref{lemma:greater_equal}}
\begin{lemma}
\label{lemma:greater_equal}
    $$
    E_1 =\Big \{ \forall c \in \mathbb{R}: \Big( \big\{ \hat B_l(\s) \geq c \big\} \subseteq \big\{ \mu(\s) \geq c \big\} \Big) \wedge \Big( \big\{ \mu(\s) \geq c \big\} \subseteq \big\{ \hat B_u(\s) \geq c \big\} \Big) \Big \}
    $$ 
    is equivalent to 
    $$
    E_1' =\Big \{ \forall c \in \mathbb{R}: \Big( \big\{ \hat B_l(\s) > c \big\} \subseteq \big\{ \mu(\s) > c \big\} \Big) \wedge \Big( \big\{ \mu(\s) > c \big\} \subseteq \big\{ \hat B_u(\s) > c \big\} \Big) \Big \}
    $$
\end{lemma}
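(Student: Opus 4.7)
The plan is to establish the equivalence directly by a limiting/density argument on the threshold $c$, avoiding any need to reintroduce the pointwise sandwich event. The key observation is that membership in the strict level set $\{\hat B_l(\s) > c\}$ is controlled by membership in nearby non-strict level sets $\{\hat B_l(\s) \geq c'\}$ for $c' > c$, and conversely, the non-strict set $\{\hat B_l(\s) \geq c\}$ can be approximated from below by strict sets $\{\hat B_l(\s) > c'\}$ for $c' < c$. This lets us transfer the subset relations from $E_1$ to $E_1'$ and back.

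For $E_1 \Rightarrow E_1'$: fix $c \in \mathbb{R}$ and take any $\s$ with $\hat B_l(\s) > c$. Choose $c'$ with $c < c' \leq \hat B_l(\s)$; then $\hat B_l(\s) \geq c'$, so by $E_1$ applied at level $c'$, $\mu(\s) \geq c' > c$. Hence $\mu(\s) > c$, giving $\{\hat B_l(\s) > c\} \subseteq \{\mu(\s) > c\}$. The upper-band inclusion is symmetric: if $\mu(\s) > c$, pick $c' \in (c, \mu(\s)]$, apply $E_1$ to obtain $\hat B_u(\s) \geq c' > c$.

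For $E_1' \Rightarrow E_1$: fix $c$ and take $\s$ with $\hat B_l(\s) \geq c$. For every $c' < c$ we have $\hat B_l(\s) > c'$, so by $E_1'$ at level $c'$, $\mu(\s) > c'$. Since this holds for all $c' < c$, taking $c' \uparrow c$ gives $\mu(\s) \geq c$. The upper-band inclusion is handled analogously by approaching $c$ from above: if $\mu(\s) \geq c$, then for every $c' < c$, $\mu(\s) > c'$, so $\hat B_u(\s) > c'$ by $E_1'$, and hence $\hat B_u(\s) \geq c$.

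The main (minor) obstacle is being careful with the direction of the limiting inequalities: one must approach $c$ from above to extract strict inequalities from non-strict ones, and from below to extract non-strict inequalities from strict ones. No topological assumption on $\mathcal{S}$ or continuity of $\hat B_l, \hat B_u, \mu$ is required, since the argument is purely pointwise on $\s$ and uses only the density of $\mathbb{R}$. Combined with Theorem \ref{theorem:upper}, this immediately yields that $E_1$, $E_1'$, and the pointwise SCI coverage event all have the same probability, which is what Corollary \ref{corollary:lower} needs.
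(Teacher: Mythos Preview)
Your proof is correct and uses essentially the same idea as the paper: both arguments exploit the density of $\mathbb{R}$ to perturb the threshold $c$ slightly and thereby convert between strict and non-strict level-set inclusions. The only cosmetic difference is that the paper phrases each direction as a proof by contradiction (introducing a small $\delta$ to produce a violating level), whereas you argue directly by picking an intermediate $c'$; the underlying mechanism is identical. One minor slip: in the $E_1' \Rightarrow E_1$ upper-band step you write ``approaching $c$ from above'' but in fact use $c' < c$, i.e.\ approach from below---the argument itself is fine.
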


\begin{proof}
    Proof by contradiction for both directions.
    
    Assume $E_1$ hold and $E_1'$ does not hold, then
    $    \exists \s' \in \mathcal{S}, \exists c' \in \mathbb{R}, \hat B_l(\s') > c' \geq \mu(\s') 
    $ so there exists $\delta >0$ such that
    $$
    \hat B_l(\s') \geq c' +\delta > \mu(\s') 
    $$
    This is a contradiction to part of $E_1$'s statement $\forall c \in \mathbb{R}, \big\{ \hat B_l(\s) \geq c \big\} \subseteq \big\{ \mu(\s) \geq c \big\}$. Similarly if $ \exists \s' \in \mathcal{S}, \exists c' \in \mathbb{R}, \hat \mu(\s') > c' \geq \hat B_u(\s) $ then there exists $\delta >0$ such that
    $$
    \hat \mu(\s') \geq c' + \delta > \hat B_u(\s) 
    $$
    This is a contradiction to the other part of $E_1$'s statement $\forall c \in \mathbb{R}, \big\{ \mu(\s) \geq c \big\} \subseteq \big\{ \hat B_u(\s) \geq c \big\}$.
    
    Assume $E_1'$ hold and $E_1$ does not hold, then there $\exists \s' \in \mathcal{S}, \exists c' \in \mathbb{R}, \hat B_l(\s') \geq c' >\mu(\s')$ so there exists $\delta > 0$ such that
    $$
    \hat B_l(\s') > c'-\delta \geq \mu(\s')
    $$
    This is a contradiction to part of $E_1'$'s statement $\forall c \in \mathbb{R}: \big\{ \hat B_l(\s) > c \big\} \subseteq \big\{ \mu(\s) > c \big\} $. Similarly if $\exists \s' \in \mathcal{S}, \exists c' \in \mathbb{R}, \mu(\s') \geq c' > \hat B_u(\s')$, there exists $\delta > 0 $ such that
    $$
    \mu(\s') > c'-\delta \geq \hat B_u(\s')
    $$
    This is a contradiction to the other part of $E_1'$'s statement $\forall c \in \mathbb{R}:\big\{ \mu(\s) > c \big\} \subseteq \big\{ \hat B_u(\s) > c \big\}$.
\end{proof}

\subsection{Lemma \ref{lemma:interval_ab}}
\begin{lemma}
\label{lemma:interval_ab}
    $$E_1 = \Big \{ \forall c \in \mathbb{R}: \Big( \big\{ \hat B_l(\s) \geq c \big\} \subseteq \big\{ \mu(\s) \geq c \big\} \Big) \wedge \Big( \big\{ \mu(\s) \geq c \big\} \subseteq \big\{ \hat B_u(\s) \geq c \big\} \Big) \Big \}$$
    is equivalent to 
    \begin{align*}
    E_1''=&\Big \{ \forall a,b\in \mathbb{R}, a<b : \Big( \big\{ \hat B_l(\s) \geq a \wedge \hat B_u(\s) \leq b \big\} \subseteq \big\{ a \leq \mu(\s)\leq b \big\} \Big) \\
    &\wedge \Big( \big\{ a \leq \mu(\s)\leq b \big\} \subseteq \big\{ \hat B_u(\s) \geq a \wedge \hat B_l(\s) \leq b  \big\} \Big) \Big \}
    \end{align*}
\end{lemma}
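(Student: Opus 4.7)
The plan is to route the equivalence through the pointwise coverage event $E_2 = \big\{\forall \s \in \mathcal{S}: \hat B_l(\s)\leq \mu(\s)  \leq \hat B_u(\s)\big\}$, which the proof of Theorem \ref{theorem:upper} has already identified with $E_1$. So it suffices to establish $E_1'' \Leftrightarrow E_2$, after which the lemma follows by transitivity. This avoids juggling containments indexed by the single parameter $c$ against containments indexed by pairs $(a,b)$ directly.

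For $E_2 \Rightarrow E_1''$, both containments in $E_1''$ are immediate consequences of the pointwise sandwich, for each fixed pair $a<b$. If $\s$ satisfies $\hat B_l(\s)\geq a$ and $\hat B_u(\s)\leq b$, then $a\leq \hat B_l(\s)\leq \mu(\s)\leq \hat B_u(\s)\leq b$, giving the inner containment. If $a\leq \mu(\s)\leq b$, then $\hat B_u(\s)\geq \mu(\s)\geq a$ and $\hat B_l(\s)\leq \mu(\s)\leq b$, giving the outer containment. This direction is entirely parallel to the analogous step in the proof of Theorem \ref{theorem:upper}.

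For $E_1'' \Rightarrow E_2$, I would argue by contrapositive. Suppose $E_2$ fails at some $\s'\in\mathcal{S}$, so either $\hat B_l(\s')>\mu(\s')$ or $\hat B_u(\s')<\mu(\s')$. In the first case, set $a=\hat B_l(\s')$ and choose $b$ strictly larger than $\max\{a,\hat B_u(\s')\}$, so that $a<b$, $\hat B_l(\s')\geq a$, and $\hat B_u(\s')\leq b$ all hold; the inner containment of $E_1''$ would then force $\mu(\s')\geq a=\hat B_l(\s')$, contradicting $\hat B_l(\s')>\mu(\s')$. The second case is symmetric: set $b=\hat B_u(\s')$ and pick $a$ strictly less than $\min\{b,\hat B_l(\s')\}$, which yields $a<b$ together with $\hat B_l(\s')\geq a$ and $\hat B_u(\s')\leq b$, so the inner containment of $E_1''$ forces $\mu(\s')\leq b=\hat B_u(\s')$, contradicting $\hat B_u(\s')<\mu(\s')$.

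The only real care required is in the contrapositive step: one must pick $a<b$ so that the premises $\hat B_l(\s')\geq a$ and $\hat B_u(\s')\leq b$ are simultaneously active at the violating $\s'$, which is exactly why the free endpoint must be pushed past both $\hat B_l(\s')$ and $\hat B_u(\s')$. Since $a$ and $b$ range freely over $\mathbb{R}$ and no relation between $\hat B_l(\s')$ and $\hat B_u(\s')$ is needed, there is no real obstacle; the argument is a straightforward strengthening of the contradiction step already used for Theorem \ref{theorem:upper}.
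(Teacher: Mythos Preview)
Your proof is correct and takes a genuinely different route from the paper. The paper argues the equivalence $E_1\Leftrightarrow E_1''$ directly: for $E_1''\Rightarrow E_1$ it ``fixes $b=+\infty$'' (a slightly informal move, since $E_1''$ quantifies over $b\in\mathbb{R}$), and for $E_1\Rightarrow E_1''$ it argues by contradiction from a failed containment at some pair $(a',b')$, invoking Lemma~\ref{lemma:greater_equal} to handle the strict-inequality case. You instead pivot through the pointwise coverage event $E_2$, whose equivalence with $E_1$ is already established in the proof of Theorem~\ref{theorem:upper}, and then verify $E_2\Leftrightarrow E_1''$ directly. Your approach buys you two things: it avoids the $b=+\infty$ shortcut entirely, and it dispenses with Lemma~\ref{lemma:greater_equal}, since working from $E_2$ keeps all inequalities non-strict. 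The paper's route is marginally more self-contained in that it does not reach back into the proof of Theorem~\ref{theorem:upper}, but at the cost of a side lemma and a limiting argument that is only gestured at. Your contrapositive step is also slightly sharper than the paper's: you only need to violate the \emph{inner} containment of $E_1''$ to derive $\neg E_1''$, whereas the paper splits on which containment fails and handles the outer-containment case only by the phrase ``similar argument.''
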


\begin{proof}
    First, it can be easily seen that $E_1''$ implies $E_1$ if we fix $b = +\infty$. 
    Then we show $E_1$ implies $E_1''$ by contradiction proof:
    Assume $E_1$ hold but $E_1''$ does not hold, then this means
    \begin{align}
    &\Big \{ \exists a<b \in \mathbb{R}: \Big( \big\{ \hat B_l(\s) \geq a \wedge \hat B_u(\s) \leq b \big\} \not\subseteq \big\{ a \leq \mu(\s)\leq b \big\} \Big) \label{eq1:lemma2} \\
    &\quad \vee \Big( \big\{ a \leq \mu(\s)\leq b \big\} \not\subseteq \big\{ \hat B_u(\s) \geq a \wedge \hat B_l(\s) \leq b  \big\} \Big) \Big \}\label{eq2:lemma2}
    \end{align}
    \ref{eq1:lemma2} is equivalent to 
    $$\exists \s'\in \mathcal{S},\exists a'<b'\in \mathbb{R}, \hat B_l(\s')\geq a' \wedge \hat B_U(\s')\leq b'$$
    but $\mu(\s')> b'$ or $\mu(\s')<a'$. If $\mu(\s')<a'$ then $\hat B_l(\s') \geq a' >\mu(\s') \Longleftrightarrow \{\hat B_l(\s) \geq a\} \not\subseteq \{\mu(\s) \geq a \}$, contradiction to $E_1$. If $\mu(\s')>b'$, then $\mu(\s')>b'\geq \hat B_u(\s') \Longleftrightarrow \{\mu(\s) > b\}\not \subseteq \{\hat B_u(\s)>b\}$. By lemma \ref{lemma:greater_equal}, this is a contradiction to $E_1$. Similar argument can be made for \ref{eq2:lemma2}.
\end{proof}
\end{appendix}

\bibliographystyle{unsrt}  
\bibliography{invers_set_estimation}

\end{document}